\newcommand{\jobAShort}[6]{%
 \begin{scope}[yshift=#1]
  \draw[draw=black,fill=red!30] (#2,0) rectangle ($(#2,0)+(#3,0.5)$);%
  \draw[draw=black,fill=red!30] (#4,0) rectangle ($(#4,0)+(#3,0.5)$);%
  \node (joblabel) at ($(#2,0)+(0,0.25)+0.5*(#3,0)$) {\footnotesize \ensuremath{#5}};%
  \node (joblabel) at ($(#4,0)+(0,0.25)+0.5*(#3,0)$) {\footnotesize \ensuremath{#6}};%
 \end{scope}
}
\newcommand{\jobBShort}[7]{%
 \begin{scope}[yshift=#1]
  \draw[draw=black,fill=green!30] (#2,0) rectangle ($(#2,0)+(#3,0.5)$);%
  \node (a) at ($(#2,0)+(0,0.25)+0.5*(#3,0)$) {\footnotesize \ensuremath{#4}};%
  \draw[thick] ($(#6,0)+(0,0.1)$) -- ($(#6,0)+(0,0.4)$);
  \draw[thick] ($(#7,0)+(0,0.1)$) -- ($(#7,0)+(0,0.4)$);
  \draw[thick] ($(#6,0)+(0,0.25)$) -- ($(#2,0)+(0,0.25)$);
  \draw[thick] ($(#7,0)+(0,0.25)$) -- ($(#2,0)+(#3,0.25)$);
 \end{scope}
}
\newcommand{\jobCShort}[7]{%
  \draw[draw=black,fill=blue!30] (#2,#1) rectangle ($(#2,#1)+(#3+1,0.5)$);%
 \node (a) at ($(#2,#1)+(0.5,0.25)+0.5*(#3,0)$) {\footnotesize \ensuremath{#4}};%
 \draw[thick] ($(#6,#1)+(0,0.25)$) -- ($(#2,#1)+(0,0.25)$);
 \draw[thick] ($(#7,#1)+(0,0.25)$) -- ($(#2,#1)+(#3,0.25)+(1,0)$);
}
\newcommand{\introduceproblem}[3]{%
\begin{quote}
#1\\
\textbf{Input:} #2\\
\textbf{Problem:} #3
\end{quote}}
\newcommand{\N}{\mathbb{N}}
\newcommand{\ama}{\[ \begin{aligned}}
\newcommand{\ema}{\end{aligned} \]}
\newcommand{\set}[1]{\left\{#1\right\}}
\newcommand{\setc}[2]{\left\{\left.#1\ \right|\ #2\right\}}
\renewcommand{\P}{\ensuremath{\mathsf{P}}\xspace}
\newcommand{\NP}{\ensuremath{\mathsf{NP}}\xspace}
\newcommand{\PARTITION}{\ensuremath{\mathsf{PARTITION}}\xspace}
\newcommand{\MAXCONNECTIVITY}{\ensuremath{\mathsf{MAXCONNECTIVITY}}\xspace}
\newcommand{\MINCONNECTIVITY}{\ensuremath{\mathsf{MINCONNECTIVITY}}\xspace}
\newcommand{\THREESAT}{\ensuremath{\mathsf{3SAT}}\xspace}
\newcommand{\NO}{\ensuremath{\mathsf{NO}}\xspace}
\newcommand{\YES}{\ensuremath{\mathsf{YES}}\xspace}
\newcommand{\SATFALSE}{\ensuremath{\mathsf{FALSE}}\xspace}
\newcommand{\SATTRUE}{\ensuremath{\mathsf{TRUE}}\xspace}
\title{Scheduling Maintenance Jobs in Networks\thanks{This work is supported by the German Research Foundation (DFG) under project ME 3825/1 and within project A07 of CRC TRR 154. It is partially funded in the framework of \textsc{Matheon} supported by the Einstein Foundation Berlin and by the Alexander von Humboldt Foundation.}}
\titlerunning{Scheduling Maintenance Jobs in Networks}
\author{Fidaa Abed\inst{1} \and Lin Chen\inst{2} \and Yann Disser\inst{3} \and Martin Gro\ss{}\inst{4} \and Nicole Megow\inst{5} \and Julie Mei\ss{}ner\inst{6} \and Alexander T. Richter\inst{7} \and Roman Rischke\inst{8}} 
\institute{University of Jeddah, Jeddah, Saudi Arabia.\\\email{fabed@uj.edu.sa}
\and University of Houston, Texas, USA.\\\email{chenlin198662@gmail.com}%
\and TU Darmstadt, Darmstadt, Germany.\\\email{disser@mathematik.tu-darmstadt.de}%
\and University of Waterloo, Waterloo, Ontario, Canada.\\\email{mgrob@uwaterloo.ca}%
\and University of Bremen, Bremen, Germany.\\\email{nicole.megow@uni-bremen.de}%
\and TU Berlin, Berlin, Germany.\\\email{jmeiss@math.tu-berlin.de}%
\and TU Braunschweig, Braunschweig, Germany.\\\email{a.richter@tu-bs.de}%
\and TU M{\"u}nchen, M{\"u}nchen, Germany.\\\email{rischke@ma.tum.de}}	
\begin{document}

\maketitle

\begin{abstract}
We investigate the problem of scheduling the maintenance of edges in a network, motivated by the goal of minimizing outages in transportation or telecommunication networks. We focus on maintaining connectivity between two nodes over time; for the special case of path networks, this is related to the problem of minimizing the busy time of machines.

We show that the problem can be solved in polynomial time in arbitrary networks if preemption is allowed. If preemption is restricted to integral time points, the problem is NP-hard and in the non-preemptive case we give strong non-approximability results. Furthermore, we give tight bounds on the power of preemption, that is, the maximum ratio of the values of non-preemptive and preemptive optimal solutions.

Interestingly, the preemptive and the non-preemptive problem can be solved efficiently on paths, whereas we show that mixing both leads to a weakly NP-hard problem that allows for a simple 2-approximation.

\keywords{Scheduling, Maintenance, Connectivity, Complexity Theory, Approximation Algorithm}
\end{abstract}

\section{Introduction} 
Transportation and telecommunication networks are important backbones of modern infra\-structure and have been a major focus of research in combinatorial optimization and other areas. 
Research on such networks usually concentrates on optimizing their usage, for example by maximizing throughput or minimizing costs. 
In the majority of the studied optimization models it is assumed that the network is permanently available, and our choices only consist in deciding which parts of the network to use at each point in time.

Practical transportation and telecommunication networks, however, can generally not be used non-stop. 
Be it due to wear-and-tear, repairs, or modernizations of the network, there are times when parts of the network are unavailable. 
We study how to schedule and coordinate such maintenance in different parts of the network to ensure connectivity.

While network problems and scheduling problems individually are fairly well understood, the combination of both areas that results from scheduling network maintenance has only recently received some attention~\cite{BolandKK15,BolandKWZ14,NurreEtAl12,Bley13,FlamminiEtAl10} and is theoretically hardly understood.

\medskip
\noindent\textbf{Problem Definition.} 
In this paper, we study connectivity problems which are fundamental in this context.
In these problems, we aim to schedule the maintenance of edges in a network in such a way as to preserve connectivity between two designated vertices. Given a network and maintenance jobs with processing
times and feasible time windows, we need to decide on the temporal
allocation of  the maintenance jobs. While a maintenance on an edge is performed, the edge is
not available. We distinguish between \MINCONNECTIVITY, the problem in
which we minimize the total time in which the network is disconnected,
and \MAXCONNECTIVITY, the problem in which we maximize the total time in which it is connected.

In both of these problems, we are given an undirected graph $G=(V,E)$ with two distinguished vertices $s^+,s^- \in V$. 
We assume w.\,l.\,o.\,g.\, that the graph is simple; we can
replace a parallel edge $\set{u,w}$ by a new node $v$ and two edges
$\set{u,v},\set{v,w}$.   
Every edge $e\in E$ needs to undergo $p_e \in \mathbb{Z}_{\geq 0}$
time units of maintenance within the time window $[r_e,d_e]$ with
$r_e,d_e \in \mathbb{Z}_{\geq 0}$, where $r_e$ is called the release
date and $d_e$ is called the deadline of the maintenance job for
edge~$e$.  An edge $e=\{u,v\}\in E$ that is maintained at time $t$, is not available at $t$ in the graph $G$. 
We consider preemptive and non-preemptive maintenance jobs. If a job
must be scheduled non-preemptively then, once it is started, it must
run until completion without any interruption. If a job is allowed to
be preempted, then its processing can be interrupted at any time and
may resume at any later time without incurring extra cost.

A \emph{schedule} $S$ for $G$ assigns the maintenance job of every edge $e \in E$ to a single time interval (if non-preemptive) or a set of disjoint time intervals (if preemptive) $S(e) :=  \{[a_1,b_1], \ldots, [a_k,b_k]\}$ with
$$r_e \leq a_i \leq b_i \leq d_e, \text{ for $i \in [k]$ and } \sum_{[a,b] \in S(e)} (b-a) = p_e .$$ 
If not specified differently, we define $T:=\max_{e\in E} d_e$ as our \emph{time horizon}. 
We do not limit the number of simultaneously maintained edges.

For a given maintenance schedule, we say that the network $G$ is \emph{disconnected at time~$t$} if there is no path from $s^+$ to $s^-$ in $G$ at time $t$, otherwise we call the network $G$ \emph{connected at time~$t$}. 
The goal is to find a maintenance schedule for the network $G$ so that the total time where~$G$ is disconnected is minimized (\MINCONNECTIVITY).
We also study the maximization variant of the problem, in which we want to find a schedule that maximizes the total time where $G$ is connected (\MAXCONNECTIVITY).

\medskip
\noindent\textbf{Our Results.}
For \emph{preemptive} maintenance jobs, we show that we can solve both
problems, \MAXCONNECTIVITY and \MINCONNECTIVITY,
efficiently in arbitrary networks~(Theorem~\ref{thm:pmtn}). This result crucially requires that
we are free to preempt jobs at arbitrary points in time. 
Under the restriction that we can {\em preempt} jobs only at {\em integral points
in time}, the problem becomes \NP-hard. More specifically, \MAXCONNECTIVITY does not admit a
$(2-\epsilon)$-approximation algorithm for any $\epsilon > 0$ in this
case, and \MINCONNECTIVITY is inapproximable
(Theorem~\ref{thm:integralPmtn}), unless
$\P=\NP$. By inapproximable, we mean that it is \NP-complete to decide
whether the optimal objective value is zero or positive, leading to
unbounded approximation factors. 
                               
This is true even for unit-size
jobs. This complexity result is
interesting and may be surprising, as it is in contrast to results for
standard scheduling problems, without an underlying network. 
Here, the restriction to integral preemption typically
does not increase the problem complexity when all other input 
parameters are integral. 
However, the same question remains open in a related problem concerning the busy-time in scheduling, studied in~\cite{khuller-mapsp,ChangKhullerMukherjee14}.

For \emph{non-preemptive} instances, we establish that there is no
$(c\sqrt[3]{|E|})$-ap\-prox\-ima\-tion algorithm for \MAXCONNECTIVITY
for some constant $c>0$ and that \MINCONNECTIVITY
is inapproximable even on disjoint paths between two nodes $s$ and $t$, unless $\P = \NP$ (Theorems~\ref{thm:no-pmtnLowerBound1},\ref{thm:no-pmtnLowerBound2}). 
On the positive side, we provide an
$(\ell+1)$-approximation algorithm for \MAXCONNECTIVITY in
general graphs~(Theorem~\ref{thm:napprox}), where $\ell$ is the number of distinct latest start times (deadline minus processing time) for jobs.

We use the notion \emph{power of preemption} to capture the benefit
of allowing arbitrary job preemption. The power of preemption is a commonly used measure for the impact of
preemption in scheduling~\cite{CanettiIrani98,CorreaSkutellaVerschae12,SchulzSkut02,SoperS14}. Other
terms used in this context include \emph{price of
  non-preemption}~\cite{CohenAddadEtAl15}, \emph{benefit of preemption}~\cite{ParsonsSevcik95} and \emph{gain of
  preemption}~\cite{Ha92}. It is defined as the maximum ratio
of the objective values of an optimal non-preemptive and an optimal preemptive
solution. We show that the power of preemption is $\Theta(\log |E|)$ for  \MINCONNECTIVITY on a path (Theorem~\ref{thm:powerofpreemption}) and unbounded for  \MAXCONNECTIVITY on a path
 (Theorem~\ref{thm:powerofpreemptionmax}). This is in contrast to
 other scheduling problems, where the power of
 preemption is constant,
 e.\,g.\! \cite{CorreaSkutellaVerschae12,SchulzSkut02}.
 
On paths, we show that \emph{mixed} instances, which have both preemptive and non-preemptive jobs, are weakly \NP-hard
~(Theorem~\ref{theorem:mixed}). This hardness result is of particular interest, 
as both purely non-preemptive and purely preemptive instances can be
solved efficiently on a path~(see Theorem~\ref{thm:pmtn} and
\cite{KhandekarEtAl15}). Furthermore, we give a simple $2$-approximation algorithm
for mixed instances of
\MINCONNECTIVITY~(Theorem~\ref{theorem:mixedapprox}).  

\medskip
\noindent\textbf{Related Work.} 
The concept of combining scheduling with network problems has been considered by different communities lately. However, the specific problem of only maintaining connectivity over time between two designated nodes has not been studied to our knowledge. Boland et al.~\cite{BolandKK15,BolandNKK15,BolandKWZ14} study the combination of non-preemptive arc maintenance in a transport network, motivated by annual maintenance planning for the Hunter Valley Coal Chain~\cite{BolandS12}. Their goal is to schedule maintenance such that the maximum $s$-$t$-flow over time in the network with zero transit times is maximized. They show strong \NP-hardness for their problem and describe various heuristics and IP based methods to address it. Also, they show in~\cite{BolandNKK15} that in their non-preemptive setting, if the input is integer, there is always an optimal solution that starts all jobs at integer time points. In~\cite{BolandKK15}, they consider a variant of their problem,
where the number of concurrently performable maintenances is bounded
by a constant. 

Their model generalizes ours in two ways -- it has capacities and the objective is to maximize the total flow value. As a consequence of this, their IP-based methods carry over to our setting, but these methods are of course not efficient. Their hardness results do not carry over, since they rely on the capacities and the different objective. However, our hardness results -- in particular our approximation hardness results -- carry over to their setting, illustrating why their IP-based models are a good approach for some of these problems.

Bley, Karch and D'Andreagiovanni~\cite{Bley13} study how to upgrade a telecommunication network to
a new technology employing a bounded number of technicians. Their goal is to minimize the total service disruption caused by downtimes. A major
difference to our problem is that there is a set of given paths that shall be upgraded and a path can only be used if it is either completely upgraded or not upgraded. They give ILP-based approaches for
solving this problem and show strong \NP-hardness for a non-constant number of paths by reduction from the linear arrangement
problem. 

Nurre et al.~\cite{NurreEtAl12} consider the problem of
restoring arcs in a network after a major disruption, with restoration
per time step being bounded by the available work force. Such network design problems over time 
have also been considered by Kalinowski, Matsypura and Savelsbergh~\cite{KalinowskiMatsypuraSavelsbergh15}.  

In scheduling, minimizing the busy time refers to minimizing the
amount of time for which a machine is used. Such problems have
applications for instance in the context of energy management~\cite{MertziosEtAl12} or fiber management in optical
networks~\cite{FlamminiEtAl10}. They have been studied from the
complexity and approximation point of view in~\cite{ChangKhullerMukherjee14,FlamminiEtAl10,KhandekarEtAl15,MertziosEtAl12}. 
The problem of minimizing the busy time is equivalent to our problem in the case of a path, because there we have connectivity at a time point when no edge in the path is maintained, i.\,e., no machine is busy.

Thus, the results of Khandekar et al.~\cite{KhandekarEtAl15} and Chang, Khuller and Mukherjee~\cite{ChangKhullerMukherjee14} have direct implications for us.
They show that minimizing busy time can be done efficiently for purely non-preemptive and purely preemptive instances, respectively. 
\section{Preemptive Scheduling}\label{sec:preemptive}
In this section, we consider problem instances where all maintenance jobs can be preempted.

\begin{theorem}\label{thm:pmtn}
  Both \MAXCONNECTIVITY and \MINCONNECTIVITY with pre\-emp\-tive jobs
  can  be solved optimally in polynomial time on arbitrary graphs.	
\end{theorem}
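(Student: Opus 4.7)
The plan is to formulate the problem as a linear program (LP) with a polynomial-time separation oracle whose optimum equals the optimum of the scheduling problem and from which an actual schedule can be extracted. I begin by sorting all release dates and deadlines into $0 = t_0 < t_1 < \ldots < t_m = T$ with $m \leq 2|E|$, yielding intervals $I_j = [t_{j-1}, t_j]$ of widths $w_j := t_j - t_{j-1}$; within each $I_j$ the set $E_j := \{e \in E : [r_e, d_e] \supseteq I_j\}$ of edges eligible for maintenance is constant. I introduce variables $x_{e,j} \in [0, w_j]$ (with $x_{e,j} = 0$ for $e \notin E_j$) denoting the total maintenance of $e$ during $I_j$, and variables $y_j \in [0, w_j]$ denoting the time $G$ is connected during $I_j$. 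The constraints are $\sum_j x_{e,j} = p_e$ for every $e$, together with cut constraints $y_j \leq \sum_{e \in C} (w_j - x_{e,j})$ for every $s^+$-$s^-$ edge cut $C \subseteq E_j$ and every~$j$. The objective is $\max \sum_j y_j$ for \MAXCONNECTIVITY and $\min \sum_j (w_j - y_j)$ for \MINCONNECTIVITY.

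The cut constraints are necessary: at every time $t \in I_j$ where $G$ is connected, no $s^+$-$s^-$ cut has all of its edges simultaneously in maintenance, so integrating over $I_j$ yields $y_j \leq \sum_{e \in C}(w_j - x_{e,j})$ for every $C \subseteq E_j$ separating $s^+$ from $s^-$. Although there are exponentially many such inequalities, they can be separated in polynomial time: for each~$j$, compute a minimum $s^+$-$s^-$ cut in $G$ with capacity $w_j - x_{e,j}$ on edges $e \in E_j$ and $+\infty$ on the remaining edges, and check whether its value is at least $y_j$. Equivalently, one may replace the cut constraints by a compact max-flow LP of polynomial size, yielding an LP solvable in polynomial time by standard methods.

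The main work, which I expect to be the technical core, is showing that every feasible LP solution is realized by an actual preemptive schedule. Each interval $I_j$ can be handled independently. By the cut constraints and max-flow/min-cut duality, $G$ with capacities $A_e := w_j - x_{e,j}$ on $E_j$-edges (and $+\infty$ elsewhere) admits an $s^+$-$s^-$ flow of value at least $y_j$. I decompose this flow into paths $P_1, \ldots, P_k$ with flow values $f_1, \ldots, f_k$ satisfying $\sum_i f_i = y_j$ and $\sum_{i : e \in P_i} f_i \leq A_e$ for every $e \in E_j$. Partition $I_j$ into consecutive subintervals of lengths $f_1, \ldots, f_k, w_j - y_j$; during the $i$-th subinterval (for $i \leq k$), forbid maintenance on the edges of $P_i$, so that $P_i$ keeps $G$ connected throughout; leave the final subinterval unconstrained. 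For every $e \in E_j$, the time within $I_j$ during which maintenance is forbidden equals $\sum_{i : e \in P_i} f_i \leq A_e$, so the remaining $w_j - A_e = x_{e,j}$ time suffices to place the $x_{e,j}$ units of maintenance for $e$ anywhere in the allowed region. Stitching the per-interval schedules together yields a preemptive schedule whose connected time equals $\sum_j y_j$, matching the LP optimum and establishing Theorem~\ref{thm:pmtn} for both \MAXCONNECTIVITY and \MINCONNECTIVITY simultaneously.
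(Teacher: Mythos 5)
Your proposal is correct and follows essentially the same route as the paper: an interval-indexed LP whose connectivity constraints are equivalent (via max-flow/min-cut) to the paper's explicit per-interval flow formulation, followed by the same realization argument that path-decomposes the flow, partitions each interval into subintervals reserved for the decomposition paths, and places the maintenance in the remaining time. The only difference is presentational—you state the LP with exponentially many cut constraints plus a separation oracle, whereas the paper writes the compact flow LP directly—and you already note this equivalence yourself.
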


\begin{proof}
 We establish a linear program (LP) for \MAXCONNECTIVITY. 
 Let $TP = \{0\}\cup\{r_e,d_e: e \in E\} = \set{t_0,t_1,\dots,t_{k}}$ be the set of all \emph{relevant time points} with $t_0 < t_1 < \dots < t_k$. We define $I_i := [t_{i-1},t_i]$ and $w_i := |I_i|$ to be the length of interval $I_i$ for $i=1,\dots,k$. 
 
 In our linear program we model connectivity during interval $I_i$ by an $(s^+,s^-)$-flow $x^{(i)}$, $i \in \{1,\ldots,k\}$.
 To do so, we add for every undirected edge $e=\{u,v\}$ two directed arcs $(u,v)$ and $(v,u)$. 
 Let $A$ be the resulting arc set.
 With each edge/arc we associate a capacity variable $y^{(i)}_{e}$, which represents the fraction of availability of edge $e$ in interval $I_i$. Hence, $1-y^{(i)}_{e}$ gives the relative amount of time spent on the maintenance of edge $e$ in $I_i$. 
 Additionally, the variable~$f_i$ expresses the fraction of availability for interval $I_i$.
 \begin{align} 
\max &&\sum_{i=1}^k w_i \cdot f_i \\
\text{s.t.} && \sum_{u:(v,u) \in A}\! x^{(i)}_{(v,u)} - \sum_{u:(u,v) \in A}\! x^{(i)}_{(u,v)}& = \begin{cases}
                  f_i        & \forall\,i\in[k],\, v=s^+, \\
                    0        & \forall\,i\in[k],\, v\in V \setminus \{s^+,s^-\},\label{ineq:flowcons}\\
                     -f_i		& \forall\,i\in[k],\, v=s^-,
                   \end{cases}\\
&& \sum_{i:I_i \subseteq [r_e,d_e]} (1-y^{(i)}_{e})w_i & \geq p_e
&&\hspace{-4cm}\forall\, e\in E,
 \label{ineq:processing}
\\ 
&&  x^{(i)}_{(u,v)},x^{(i)}_{(v,u)} & \leq y^{(i)}_{\{u,v\}}
&& \hspace{-4cm}\forall\, i \in [k],\, \{u,v\}\in E,
 \label{ineq:bound1}
\\
&& f_i&\leq 1 
&&\hspace{-4cm}\forall\, i \in [k],
&& \label{ineq:sourceflow}
\\
&& x^{(i)}_{(u,v)},x^{(i)}_{(v,u)},y^{(i)}_{\{u,v\}} &\in [0,1]
&&\hspace{-4cm}\forall\, i \in [k],\, \{u,v\} \in E\label{ineq:nonneg}.
\end{align}  
  Notice that the LP is polynomial in the input size, since $k \leq 2|E|$. We show in Lemma~\ref{Lem:PreemptiveRelaxation} that this LP is a relaxation of preemptive \MAXCONNECTIVITY on general graphs and in Lemma~\ref{Lem:PreemptiveTrans} that any optimal solution to it can be turned into a feasible schedule with the same objective function value in polynomial time, which proves the claim for \MAXCONNECTIVITY. For \MINCONNECTIVITY, notice that any solution that maximizes the time in which $s$ and $t$ are connected also minimizes the time in which $s$ and $t$ are disconnected -- thus, we can use the above LP there as well.
\qed
\end{proof}

Next, we need to prove the two lemmas that we used in the proof of Theorem~\ref{thm:pmtn}. We begin by showing that the LP is indeed a relaxation of our problem.

 \begin{lemma}
  \label{Lem:PreemptiveRelaxation}
  The given LP is a relaxation of preemptive \MAXCONNECTIVITY on general graphs.	
 \end{lemma}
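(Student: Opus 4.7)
The plan is to take any feasible preemptive schedule $S$ and construct a feasible LP solution whose objective equals the total connectivity time of $S$. I would set $y^{(i)}_e$ to the fraction of $I_i$ during which edge $e$ is not being maintained under $S$, and set $f_i$ to the fraction of $I_i$ during which $G$ is $(s^+,s^-)$-connected under $S$. With these definitions $\sum_i w_i f_i$ is, by construction, exactly the total connectivity time of $S$, so the objective match is automatic. Constraint~(\ref{ineq:processing}) holds because $r_e,d_e\in TP$ forces every $I_i$ to be either entirely inside or entirely outside $[r_e,d_e]$, so $\sum_{I_i\subseteq[r_e,d_e]}(1-y^{(i)}_e)w_i$ collapses to the total maintenance time of $e$, which equals $p_e$. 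Constraints~(\ref{ineq:sourceflow}) and~(\ref{ineq:nonneg}) are immediate from the definitions.

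The heart of the proof is constructing the flow variables $x^{(i)}$ by time-averaging instantaneous unit flows. Because $S$ is preemptive with only finitely many maintenance intervals, each $I_i$ decomposes into finitely many sub-intervals on which the set of available edges is constant. On every such sub-interval for which the currently available subgraph contains an $(s^+,s^-)$-path, I would fix one such path, orient it from $s^+$ to $s^-$, and send one unit of flow along the resulting directed arcs; on disconnected sub-intervals I would send zero. Taking the convex combination of these instantaneous flows, weighted by sub-interval lengths divided by $w_i$, yields $x^{(i)}$. By linearity of the flow-conservation equations, the averaged flow has value exactly $f_i$, which establishes~(\ref{ineq:flowcons}).

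The capacity constraint~(\ref{ineq:bound1}) then follows because a directed copy $(u,v)$ of edge $e=\{u,v\}$ carries positive flow on a sub-interval only when $e$ is available there, so the averaged flow on $(u,v)$ is at most the fraction of $I_i$ during which $e$ is available, which is exactly $y^{(i)}_e$; the same bound applies independently to $(v,u)$. I expect the only delicate point in the argument to be this time-averaging step—in particular, verifying that the two directed copies of an edge remain separately bounded even when different sub-intervals happen to traverse $e$ in opposite directions. Everything else reduces to bookkeeping with fractions of time.
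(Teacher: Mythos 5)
Your construction is essentially identical to the paper's: both set $f_i$ and $y^{(i)}_e$ as time fractions, observe that constraint~\eqref{ineq:processing} follows from feasibility of the schedule (since $r_e,d_e\in TP$), and build $x^{(i)}$ by averaging unit path flows over the maximal sub-intervals of $I_i$ on which the set of maintained edges is constant. The delicate point you flag is indeed fine: each directed copy of $e$ separately accumulates flow only from sub-intervals where $e$ is available, so each is bounded by $y^{(i)}_e$ as required by~\eqref{ineq:bound1}.
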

 
\begin{proof}
Given a feasible maintenance schedule, consider an arbitrary interval $I_i$, $i\in \{1,\ldots,k\}$, and let $[a^i_1,b^i_1]\,\dot{\cup}\,\ldots \dot{\cup}\,[a^i_{m_i},b^i_{m_i}] \subseteq I_i$ be all intervals where $s^+$ and $s^-$ are connected in interval~$I_i$.
We set $f_i= \sum_{\ell=1}^{m_i} (b^i_{\ell} -a^i_{\ell})/ w_i \leq 1$ and set $y^{(i)}_{e} \in [0,1]$ to the fraction of time where edge $e$ is not maintained in interval $I_i$.
Note that \eqref{ineq:processing} is automatically fulfilled, since we consider a feasible schedule. 
It is left to construct a feasible flow $x^{(i)}$ for the fixed variables $f_i$ and $y^{(i)}$ for all $i=1,\ldots,k$.

Whenever the given schedule admits connectivity we can send one unit of flow from $s^+$ to $s^-$ along some directed path in $G$.
Moreover, in intervals where the set of processed edges does not change we can use the same path for sending the flow.
Let $[a,b]\subseteq I_i$ be an interval where the set of processed edges does not change and in which we have connectivity. Let $\mathcal{C}_i$ be the collection of all such intervals in $I_i$.
Then, we send a flow $x^{(i)}_{[a,b]}$ from $s^+$ to $s^-$ along any path of total value $(b-a)/w_i$ using only arcs for which the corresponding edge is not processed in $[a,b]$.
The flow $x^{(i)} = \sum_{[a,b]\in \mathcal{C}_i} x^{(i)}_{[a,b]}$, which is a sum of vectors, gives the desired flow. 
The constructed flow $x^{(i)}$ respects the flow conservation \eqref{ineq:flowcons} and non-negativity constraints \eqref{ineq:nonneg}, uses no arc more than the corresponding $y^{(i)}_{e}$, since flow $x^{(i)}$ is driven by the schedule. \qed
\end{proof}

 \begin{lemma}
  \label{Lem:PreemptiveTrans}
  Any feasible LP solution can be turned into a feasible maintenance schedule at no loss in the objective function value in polynomial time.	
 \end{lemma}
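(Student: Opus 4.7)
The plan is to take an arbitrary feasible LP solution $(x,y,f)$ and build, interval by interval, an explicit preemptive schedule whose total connected time equals the LP value $\sum_{i=1}^{k} w_i f_i$. Working inside a single relevant interval $I_i$, I first clean up the arc flow $x^{(i)}$: whenever both $x^{(i)}_{(u,v)}$ and $x^{(i)}_{(v,u)}$ are positive, subtract their common minimum from both. This preserves flow conservation at every vertex, does not change the excess $f_i$ at $s^+$, and only decreases each directed-arc value, so~\eqref{ineq:bound1} is maintained. After this cancellation, on every undirected edge $e=\{u,v\}$ the ``undirected flow'' $x^{(i)}_{(u,v)}+x^{(i)}_{(v,u)}$ equals the single surviving directed value and is therefore at most $y^{(i)}_e$.

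Next, I apply standard polynomial-time flow decomposition to the cleaned $(s^+,s^-)$-flow of value $f_i$, writing it as $\sum_{j=1}^{q_i}\lambda^{(i)}_j\,\chi_{P^{(i)}_j}$ with simple $(s^+,s^-)$-paths $P^{(i)}_j$, nonnegative weights, and $\sum_j\lambda^{(i)}_j=f_i$; cycles are discarded and $q_i\le |A|$. I then slice $I_i$ into consecutive sub-intervals $J^{(i)}_1,\dots,J^{(i)}_{q_i},J^{(i)}_{q_i+1}$ of lengths $\lambda^{(i)}_1w_i,\dots,\lambda^{(i)}_{q_i}w_i,(1-f_i)w_i$. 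On $J^{(i)}_j$ I declare every edge of $P^{(i)}_j$ \emph{protected} (no maintenance allowed during that sub-interval); on $J^{(i)}_{q_i+1}$ nothing is protected. With this rule, the entire path $P^{(i)}_j$ is available throughout $J^{(i)}_j$, certifying $(s^+,s^-)$-connectivity there, so the constructed schedule is connected for at least $\sum_j\lambda^{(i)}_jw_i=f_iw_i$ time units in $I_i$, matching the LP objective after summing over $i$.

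It remains to place, for each edge $e$, exactly $p_e$ units of maintenance into the unprotected part of $[r_e,d_e]$; different edges decouple because the model imposes no bound on parallel maintenance. In $I_i$ the total protected time for $e$ is $\bigl(\sum_{j:e\in P^{(i)}_j}\lambda^{(i)}_j\bigr)w_i$, which by the path decomposition equals the undirected flow through $e$ in $I_i$ and is therefore at most $y^{(i)}_ew_i$. Hence at least $(1-y^{(i)}_e)w_i$ of $I_i$ is available for maintenance of $e$, and summing this lower bound over $I_i\subseteq[r_e,d_e]$ is precisely what~\eqref{ineq:processing} guarantees to be at least $p_e$. Since jobs are preemptive, I can split the $p_e$ units arbitrarily across these unprotected slots.

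The only delicate step is the last one: one must verify that the capacity bound from~\eqref{ineq:bound1} really yields $\sum_{j:e\in P^{(i)}_j}\lambda^{(i)}_j\le y^{(i)}_e$ rather than the weaker $2y^{(i)}_e$ one would get without cancelling opposite arc flows, and that~\eqref{ineq:processing} is exactly what converts per-interval slack into a global budget sufficient to absorb $p_e$. Once this bookkeeping is in place, feasibility and objective value of the constructed schedule follow directly, which completes the proof.
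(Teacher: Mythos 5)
Your proof is correct and follows essentially the same route as the paper's: per-interval path decomposition of the flow $x^{(i)}$, consecutive sub-intervals of length $\lambda^{(i)}_j w_i$ during which the edges of $P^{(i)}_j$ are protected, and then constraints~\eqref{ineq:bound1} and~\eqref{ineq:processing} to fit the $p_e$ units of maintenance into the unprotected time. Your explicit cancellation of antiparallel arc flows is exactly the point the paper handles by ``deleting all flow from possible circulations'' before decomposing; you are slightly more careful in spelling out why this gives the bound $y^{(i)}_e$ rather than $2y^{(i)}_e$.
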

\begin{proof} 
Let $(x,y,f)$ be a feasible solution of the given LP. Let $\mathcal{P}^i:=(P^i_1,\ldots,P^i_{\lambda_i})$ be a path decomposition~\cite{PathDecomp} of the $(s^+,s^-)$-flow~$x^{(i)}$ for an arbitrary interval $I_i:=[a_i,b_i]$, $i\in \{1,\ldots,k\}$, after deleting all flow from possible circulations.
Furthermore, let $x(P^i_\ell)$ be the value of the $(s^+,s^-)$-flow $x^{(i)}$ sent along the directed path~$P^i_\ell$.
For each arc $a \in A$ we have that $\sum_{\ell \in[\lambda_i]: a \in P^i_\ell} x(P^i_\ell) = x^{(i)}_{a}$ by the definition of $\mathcal{P}^i$.
Hence, we get $\sum_{\ell \in [\lambda_i]} x(P^i_\ell) = f_i \leq 1$ by using \eqref{ineq:sourceflow}. 
We now divide the interval~$I_i$ into disjoint subintervals to allocate connectivity time for each path in our path decomposition.
More precisely, we do \emph{not} maintain any arc $(u,v)$ (resp. edge $\{u,v\}$) contained in  $P^i_\ell$, $\ell = 1,\ldots,\lambda_i$, in the time interval
\begin{equation}
 \left[a_i+\sum_{m=1}^{\ell -1} w_i \cdot x(P^i_m), a_i+ \sum_{m =1}^{\ell} w_i \cdot x(P^i_m)\right] \text{ of length } w_i \cdot x(P^i_{\ell}) .
\end{equation}
Inequality~\eqref{ineq:bound1} and $\sum_{\ell \in[\lambda_i]: a \in P^i_\ell} x(P^i_\ell) = x^{(i)}_{a}$ thereby ensure that by now the total time where edge~$e$ does not undergo maintenance in interval $I_i$ equals at most $w_i \cdot y^{(i)}_{e}$ time units. 
By Inequality~\eqref{ineq:processing}, we can thus distribute the processing time of the job for edge $e$ among the remaining slots of all intervals~$I_i$, $i =1,\ldots,k$.
For instance, we could greedily process the job for edge~$e$ as early as possible in available intervals.
Note that arbitrary preemption of the processing is allowed.
By construction, we have connectivity on path  $P^i_\ell$, $\ell =1,\ldots, \lambda_i$, for at least $w_i \cdot x(P^i_\ell)$ time units in interval $I_i$. 
Thus, the constructed schedule has total connectivity time of at least $\sum_{i=1}^k w_i \sum_{\ell=1}^{\lambda_i} x(P^i_\ell) = \sum_{i=1}^k w_i \cdot f_i$. Since the path decomposition can be computed in polynomial-time and the resulting number of paths is bounded by the number of edges~\cite{PathDecomp}, we can obtain the feasible schedule in polynomial-time. \qed
\end{proof}

For unit-size jobs we can simplify the given LP by restricting to the first $|E|$ slots within every interval $I_i$. 
This, in turn, allows to consider intervals of unit-size, i.e., we have $w_i=1$ for all intervals $I_i$, which affects constraint~\eqref{ineq:processing}.
However, one can show that the constraint matrix of this LP is generally not totally unimodular.
We illustrate the behaviour of the LP with the help of the following exemplary instance in Figure~\ref{fig:pmtnIntegral}, in which all edges have unit-size jobs associated and the label of an edge $e$ represents $(r_e,d_e)$.
It is easy to verify that a schedule that preempts jobs only at integral time points, has maximum connectivity time of one.
However, the following schedule with arbitrary preemption has connectivity time of two.
We process $\{s^+,v_2\}$ in $[0,0.5] \cup [1,1.5]$, $\{s^+,v_3\}$ in $[0.5,1] \cup [1.5,2]$, $\{v_4,s^-\}$ in $[0,0.5] \cup [1.5,2]$, $\{v_5,s^-\}$ in $[0.5,1.5]$, and the other edges are fixed by their time window.
This instance shows that the integrality gap of the LP is at least two.
 
\begin{figure}[htb]
  \centering
  \tikzstyle{vertex}=[draw,circle,inner sep=1pt,minimum size=22pt,scale=1.0]
  \tikzstyle{edge} = [draw,thick,-]
  \tikzstyle{weight} = [sloped,above,font=\small]
  \begin{tikzpicture}[scale=1.0]
    \node[vertex] (v1) at (0,0) {$s^+$};
    \node[vertex] (v2) at (2,2) {$v_2$};
    \node[vertex] (v3) at (2,-2) {$v_3$};
    \node[vertex] (v4) at (5,2) {$v_4$};
    \node[vertex] (v5) at (5,-2) {$v_5$};
    \node[vertex] (v6) at (7,0) {$s^-$};
    \path[edge] (v1) -- node[weight] {$(0,2)$} (v2);
    \path[edge] (v1) -- node[weight] {$(0,2)$} (v3);
    \path[edge] (v2) -- node[weight] {$(1,2)$} (v4);
    \path[edge] (v2) -- node[weight,near start] {$(0,1)$} (v5);
    \path[edge] (v3) -- node[weight,near start] {$(0,1)$} (v4);
    \path[edge] (v3) -- node[weight] {$(1,2)$} (v5);
    \path[edge] (v4) -- node[weight] {$(0,2)$} (v6);
    \path[edge] (v5) -- node[weight] {$(0,2)$} (v6);
  \end{tikzpicture}
  \caption{Example for the difference between arbitrary preemption and preemption only at integral time points.}\label{fig:pmtnIntegral}
\end{figure}
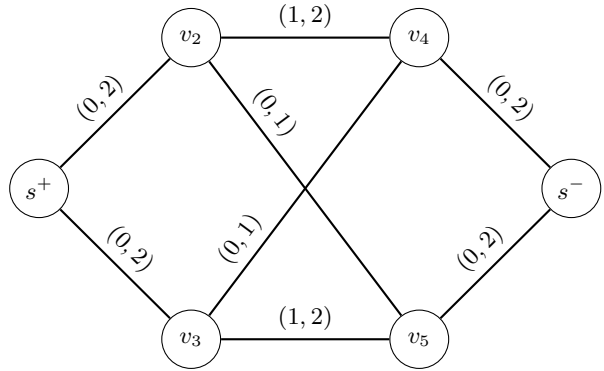

The statement of Theorem~\ref{thm:pmtn} crucially relies on the fact that we may preempt jobs arbitrarily. 
However, if preemption is only possible at integral time points, the problem becomes \NP-hard even for unit-size jobs. This follows from the proof of Theorem~\ref{thm:no-pmtnLowerBound1} for $t_1=0$, $t_2=1$, and $T=2$.

\begin{theorem}\label{thm:integralPmtn}
 \MAXCONNECTIVITY with preemption only at integral time points is NP-hard and does not admit a $(2-\epsilon)$-approximation algorithm for any $\epsilon > 0$, unless $\P=\NP$.
	Furthermore, \MINCONNECTIVITY with preemption only at integral time points is inapproximable.
\end{theorem}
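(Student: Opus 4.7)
The plan is to reduce from \THREESAT, exploiting the observation that when the time horizon is $T = 2$ and all maintenance jobs have unit size, integral preemption provides no additional freedom over non-preemption: a unit job whose time window is contained in $[0,2]$ must entirely occupy either slot $[0,1]$ or slot $[1,2]$. This is precisely the situation corresponding to the parameters $t_1 = 0$, $t_2 = 1$, $T = 2$ in the hardness construction of Theorem~\ref{thm:no-pmtnLowerBound1}; hence that construction transfers verbatim to our integral-preemptive setting.

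Concretely, given a \THREESAT formula $\varphi$ with variables $x_1, \dots, x_n$ and clauses $C_1, \dots, C_m$, I would construct a graph $G$ with terminals $s^+, s^-$ as follows. For each variable $x_i$, introduce a pair of unit-size edges $e_i^+$ and $e_i^-$ with window $[0,2]$, placed inside a small gadget that forces them into opposite slots. Scheduling $e_i^+$ in slot $2$ (so that it is available in slot $1$) then encodes $x_i = \SATTRUE$ in slot $1$ and $x_i = \SATFALSE$ in slot $2$, and dually for $e_i^-$. For each clause $C_j$, build a parallel gadget whose three branches carry the edges associated with its three literals; this gadget admits an $s^+$-$s^-$-path during slot $t$ iff at least one of its literals is true under the assignment encoded in slot $t$. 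Cascading the $m$ clause gadgets in series yields a graph that is connected at time $t$ iff all clauses are satisfied by the slot-$t$ assignment. After subdividing any multi-edges as required by the simple-graph assumption, the resulting instance has size polynomial in $n + m$.

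If $\varphi$ is satisfiable, I schedule $e_i^\pm$ according to a satisfying assignment and obtain connectivity throughout both slots, for total connectivity time $2$ and total disconnection time $0$. If $\varphi$ is unsatisfiable, then in any integral-preemptive schedule the induced assignment falsifies some clause in at least one slot, so the graph is entirely disconnected during that slot, giving connectivity time at most $1$ and disconnection time at least $1$. This $2$-vs-$1$ separation immediately yields \NP-hardness and rules out any $(2 - \epsilon)$-approximation for \MAXCONNECTIVITY, while the $0$-vs-positive separation yields inapproximability of \MINCONNECTIVITY in the sense defined in the introduction, all under the assumption $\P \neq \NP$.

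The main obstacle I expect is engineering the pair-gadget that couples $e_i^+$ and $e_i^-$ into opposite slots, together with the clause gadgets, so that (a) no unintended schedule produces spurious connectivity, and (b) the simple-graph preprocessing does not destroy the encoding. A clean approach is to bind the two literal edges via a short detour through an auxiliary vertex so that placing both of them in the same slot disconnects that detour, and to fill each clause gadget with always-available bypass edges so that only the literal edges govern the slot-wise connectivity. Verifying that these local gadgets compose without interference into a global $2$-vs-$1$ connectivity gap, and that the simple-graph subdivision preserves all processing-time and window constraints, is where the bulk of the technical care is needed.
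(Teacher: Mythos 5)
Your proposal is correct and matches the paper's own argument: the paper proves this theorem exactly by invoking the construction of Theorem~\ref{thm:no-pmtnLowerBound1} with $t_1=0$, $t_2=1$, $T=2$, relying on the same key observation you make, namely that a unit-size job with window inside $[0,2]$ under integral preemption must fully occupy one of the two unit slots, so the non-preemptive $2$-vs-$1$ (resp.\ $0$-vs-positive) gap transfers verbatim. Your additional re-engineering of the variable and clause gadgets is unnecessary, since the existing construction already works as is.
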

\section{Non-Preemptive Scheduling}\label{sec:nonpreemptive_hardness} 
We consider problem instances in which no 
job can be
preempted. We show that there is no $(c\sqrt[3]{|E|})$-approximation
algorithm for
\MAXCONNECTIVITY for some $c > 0$. We also show that \MINCONNECTIVITY is inapproximable, unless $\P = \NP$. Furthermore, we give an $(\ell+1)$-approximation algorithm, where $\ell := |\setc{d_e-p_e}{e\in E}|$ is the number of distinct latest start times for jobs. 

To show the strong hardness of approximation for
\MAXCONNECTIVITY, we begin with a weaker result which provides us with a
crucial gadget. 

\begin{theorem}
 \label{thm:no-pmtnLowerBound1}
 Non-preemptive \MAXCONNECTIVITY does not admit a
 $(2-\epsilon)$-approximation algorithm, for $\epsilon > 0$, and
 non-preemptive \MINCONNECTIVITY is inapproximable, unless $\P=\NP$. This holds even for unit-size jobs.
\end{theorem}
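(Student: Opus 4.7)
The plan is to prove hardness by a polynomial-time reduction from an \NP-complete satisfiability problem, which I will take to be \textsc{3SAT}. Given a formula $\varphi$ I will construct an instance of non-preemptive \MAXCONNECTIVITY with unit processing times, integer release dates and deadlines, and time horizon $T=2$. Since every job is unit-sized and all time windows are integral, each job is executed entirely in $[0,1]$ or entirely in $[1,2]$, so a feasible schedule is nothing else than a $\{0,1\}$-labelling of the edges specifying in which of the two slots each job is run. The reduction will be designed to ensure a sharp $2{:}1$ gap: if $\varphi$ is satisfiable, the graph can be kept $s^+$--$s^-$ connected for the whole interval $[0,2]$, whereas if $\varphi$ is unsatisfiable, then in any schedule the graph is disconnected for at least one entire half-slot, so the total connected time is at most $1$. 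This immediately yields both the $(2-\epsilon)$-inapproximability of \MAXCONNECTIVITY and the inapproximability of \MINCONNECTIVITY, since in the satisfiable case the optimal disconnected time is $0$ and any finite multiplicative factor would decide satisfiability.

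The building blocks I plan to use are: a \emph{variable gadget} consisting of a single choice edge $e_i$ of window $[0,2]$ whose slot encodes the truth value of $x_i$; a \emph{clause gadget}, inserted in series between two anchor vertices, which should be traversable at time $t$ iff the clause is satisfied by the currently encoded assignment; and a \emph{consistency gadget} that ties the fresh copies of choice edges, necessary because a single edge cannot physically appear on more than one literal-path in a simple graph, back to their variable edges $e_i$ with the correct polarity. A natural realisation of the clause gadget is three internally-disjoint literal-paths placed in parallel with a \emph{bypass path} whose edges are forced to be executed in $[0,1]$; the bypass keeps the gadget trivially open during $[1,2]$, reducing the connectivity obligation during $[0,1]$ to ``at least one literal is true''. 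Concatenating the clause gadgets in series between $s^+$ and $s^-$, and attaching the consistency gadgets as additional internally-disjoint side paths, produces the target graph $G_\varphi$.

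Correctness will follow directly: a satisfying assignment induces a schedule that keeps every clause gadget open during $[0,1]$ (via a satisfied literal-path) and every bypass open during $[1,2]$, so $G_\varphi$ is connected throughout $[0,2]$; conversely, any schedule with connected time strictly greater than $1$ must keep $G_\varphi$ connected in an open subinterval of each half, and through the consistency gadgets this induces a single well-defined Boolean assignment which then has to satisfy every clause. The step I expect to be the main obstacle is the design of the consistency gadget: it must (i) enforce equality or complementation of choice edges while using only unit-size jobs on a simple graph, (ii) itself be traversable in both halves precisely when the consistency constraint is met, and (iii) compose with the clause gadgets and with each other without creating shortcuts that an adversarial schedule could exploit to salvage connectivity in the \textsc{no}-case. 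Once a working consistency gadget is in place, the quantitative $2{:}1$ gap, and with it the statement of the theorem, follows without further difficulty.
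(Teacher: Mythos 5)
Your overall shape is right---reduce from \THREESAT, set up two unit time slots, and engineer a $2{:}1$ gap so that $(2-\epsilon)$-hardness for \MAXCONNECTIVITY and inapproximability for \MINCONNECTIVITY both follow---and your reading of the consequences of the gap is correct. But the proof has a genuine hole exactly where you flag it: the consistency gadget is not constructed, and it is not a routine detail that ``follows without further difficulty'' once found. It is the entire content of the reduction. Without it, an unsatisfiable formula can still admit a schedule that keeps every clause gadget open during $[0,1]$, because each clause gadget independently picks some literal-path to leave unmaintained and nothing forces the copies of a literal in different clauses (or a literal and its negation) to agree. Worse, your architecture actively works against building such a gadget: since the bypass paths keep every clause gadget open throughout $[1,2]$, the second time slot imposes no connectivity obligation at all, so there is no scheduling tension available to punish an inconsistent assignment. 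A ``side path'' attached in parallel cannot create that tension, because connectivity only requires \emph{some} path to exist; it can never force a \emph{specific} edge to be available.

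The paper's construction resolves precisely this by merging the choice mechanism and the consistency mechanism into one object. The decision edges are \emph{long} jobs with $r_e=t_1$, $d_e=t_2+1$, $p_e=t_2-t_1$, so each must block exactly one of the two unit slots $[t_1,t_1+1]$ or $[t_2,t_2+1]$. All occurrences of the literals of $x_i$ are chained into a single gadget between nodes $v_i$ and $v_{i+1}$ (an upper $y$-chain for positive occurrences, a lower $z$-chain for negative ones), and the gadgets for $x_1,\dots,x_n$ are placed in series. Connectivity in the second slot is only achievable via a single ``variable path'' threading through every gadget, and which side of each gadget that path uses \emph{is} the truth assignment---consistency is automatic because it is one path. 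Connectivity in the first slot is only achievable via a ``clause path'' through $c_1,\dots,c_{m+1}$ that must reuse the complementary halves of the variable gadgets, which is possible iff every clause is satisfied. In short: the same edge plays both the variable-encoding role (slot two) and the clause-satisfying role (slot one), and it can only play one of them. If you want to salvage your version, you need to replace the free bypass in $[1,2]$ with a mandatory structure in that slot that forces every decision edge to commit, which essentially reconstructs the paper's gadget. A secondary point: for non-preemptive unit jobs with window $[0,2]$ you cannot simply assert that each job runs entirely in $[0,1]$ or $[1,2]$---a start time of $0.5$ is feasible; you must either argue by exchange that straddling starts never help the maximizer, or (as the paper does) argue directly from positive connectivity time in each slot to the existence of the two witness paths.
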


\begin{proof}
  We show that the existence of a $(2-\epsilon)$-approximation algorithm for non-preemptive \MAXCONNECTIVITY allows to distinguish between \YES- and \NO-instances of \THREESAT in polynomial time. 
  Given an instance of \THREESAT consisting of $m$ clauses $C_1,C_2,\dots,C_m$ each of exactly three variables in $X=\{x_1, x_2, \dots, x_n\}$, we construct the following instance of non-preemptive \MAXCONNECTIVITY.
  We pick two arbitrary but distinct time points $t_1+1 \leq t_2$ and a polynomially bounded time horizon $T \geq t_2+1$. 
  We construct our instance such that connectivity is impossible outside $[t_1,t_1+1]$ and $[t_2,t_2+1]$.
  For this, $s^+$ is followed by a path $P$ from $s^+$ to a vertex $s'$ composed of three edges that disconnect $s^+$ from $s^-$ in the time intervals $[0,t_1]$, $[t_1+1,t_2]$, and $[t_2+1,T]$.
  These edges $e$ have $p_e = d_e - r_e$.
  Furthermore, we construct the network such that the total connectivity time is greater than one if and only if the \THREESAT-instance is a \YES-instance. And we show that if the total connectivity time is greater than one, then there is a schedule with maximum total connectivity time of two. The high-level structure of the graph we will be creating can be found in Figure~\ref{figure:connectivity_variablesS}, with an expanded version following later in Figure~\ref{figure:connectivity_variables}.
	
 \begin{figure}[bht]
   \centering
  \includegraphics[width=\textwidth]{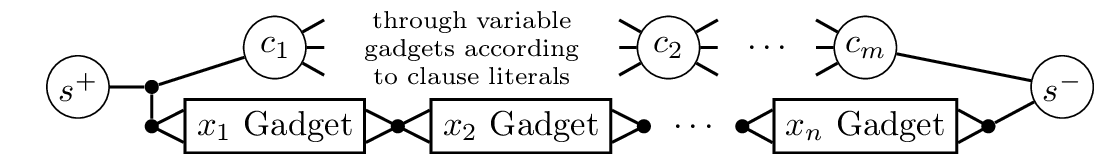}
  \caption{High-level view of the construction for Theorem~\ref{thm:no-pmtnLowerBound1}. \label{figure:connectivity_variablesS}}
 \end{figure}
  
  Let $Y(x_i)$ be the set of clauses containing the literal $x_i$ and $Z(x_i)$ be the set of clauses containing the literal~$\neg x_i$, and set set $k_i = 2 |Y(x_i)|$ and $\ell_i= 2|Z(x_i)|$. 
  We define the following node sets 
  \begin{itemize} 
   \item $V_1:= \{ y_i^1,\ldots,y_i^{k_i}\mid i=1,\ldots,n\}$,
   \item $V_2:=\{ z_i^1,\ldots,z_i^{\ell_i} \mid i=1,\ldots,n\}$, 
   \item $V_3:=\{ c_r \mid r=1,\ldots,m+1\}$,
   \item $V_4:=\{v_i \mid i=1,\ldots,n+1\}$ 
   \item and set $V= \bigcup_{j=1}^4 V_j \cup \{v : v \in P\} \cup \{s^-\}$.   
  \end{itemize}
  We introduce three edge types
  \begin{itemize}
   \item $\mathcal{E}_1:=\{e\in E: r_e=t_1, d_e=t_2+1, p_e=t_2-t_1 \}$, 
   \item $\mathcal{E}_2:=\{e\in E: r_e=t_1, d_e=t_1+1, p_e=1\}$,
   \item and $\mathcal{E}_3:=\{e\in E: r_e=t_2, d_e=t_2+1, p_e=1 \}$.
  \end{itemize}
  
  The graph $G=(V,E)$ consists of variable gadgets, shown in Figure~\ref{figure:connectivity_variables}, to which we connect the clause nodes $c_r$, $r =1,\ldots,m+1$. 
  We define the following edge sets for the variable gadgets, namely, 
  \begin{itemize}
   \item $E_1:= \{ \{s',v_1\},\{v_{n+1},s^-\} \}$ of type $\mathcal{E}_2$,
   \item $E_2 := \{ \{v_i, y_i^1\}, \{v_i, z_i^1\}, \{y_i^{k_i},v_{i+1}\},\{z_i^{\ell_i},v_{i+1}\}: i=1,\ldots,n\}$ of type $\mathcal{E}_2$, 
   \item $E_3:= \{ \{y_i^q, y_i^{q+1}\} : i=1,\ldots,n; q=1,3,\ldots,k_i-3, k_i-1\}$ of type $\mathcal{E}_1$,
   \item $E_4:= \{ \{z_i^q, z_i^{q+1}\} : i=1,\ldots,n; q=1,3,\ldots,\ell_i-3, \ell_i-1\}$ of type $\mathcal{E}_1$, 
   \item $E_5 := \{ \{y_i^q, y_i^{q+1}\} : i=1,\ldots,n; q=2,4,\ldots,k_i-4, k_i-2\}$ of type $\mathcal{E}_2$,
   \item and $E_6 := \{ \{z_i^q, z_i^{q+1}\} : i=1,\ldots,n; q=2,4,\ldots,\ell_i-4, \ell_i-2\}$ of type $\mathcal{E}_2$.
  \end{itemize}  
  Notice that a variable $x_i$ may only appear positive ($\ell_i=0$) or only negative ($k_i =0$) in our set of clauses. 
  In this case, we also have an edge of type $\mathcal{E}_2$ connecting $v_i$ and $v_{i+1}$ besides the construction for the negative ($z$ nodes) or positive part ($y$ nodes).
  Finally, we add edges to connect the clause nodes to the graph.
  If some positive literal $x_i$ appears in clause $C_r$ and $C_r$ is the $q$-th clause with positive $x_i$, we add the edges $\{c_r,y_i^{2q-1}\}$ and $\{y_i^{2q},c_{r+1}\}$ both of type $\mathcal{E}_3$. 
  Conversely, if some $x_i$ appears negated in $C_r$ and $C_r$ is the $q$-th clause with $\neg  x_i$, we add the edges $\{c_r,z_i^{2q-1}\}$ and $\{z_i^{2q},c_{r+1}\}$ both of type $\mathcal{E}_3$. 
  We also connect $c_1$ and $c_{m+1}$  to the graph by adding $\{s',c_1\}$ and $\{c_{m+1},s^-\}$ of type $\mathcal{E}_3$.
  We define $E$ to be the union of all introduced edges.
  Observe that the network $G$ has $O(n+m)$ nodes and edges.
	
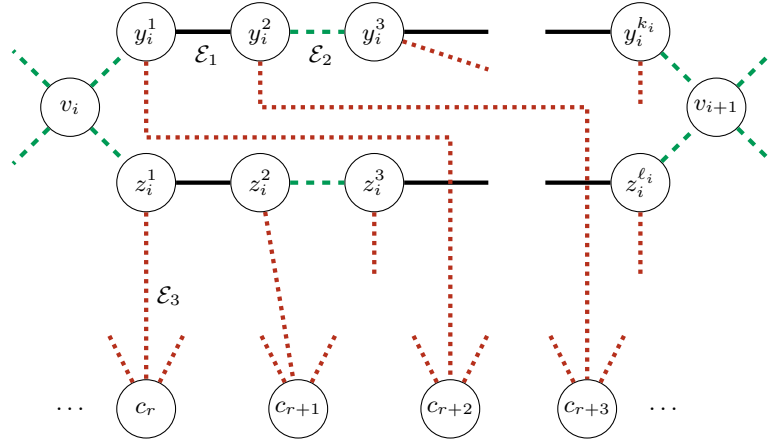
\begin{figure}[htb]
  \centering
  \tikzstyle{switch}=[ultra thick]
  \tikzstyle{zeroblocked}=[switch,dotted,BrickRed]
  \tikzstyle{oneblocked}=[switch,dashed,ForestGreen]
  \begin{tikzpicture}[every node/.style=draw,circle,inner sep=1pt,minimum size=22pt]
   \node (vi)  at (0.00,0.00) {$v_i$};    
   \node (yi1)  at (1.00,1.00) {$y^1_i$};
   \node (zi1)  at (1.00,-1.00) {$z^1_i$};
   \node (yi2)  at (2.50,1.00) {$y^2_i$};
   \node (zi2)  at (2.50,-1.00) {$z^2_i$};   
   \node (yi3)  at (4.00,1.00) {$y^3_i$};
   \node (zi3)  at (4.00,-1.00) {$z^3_i$};      
   \node (yi4)  at (7.50,1.00) {$y^{k_i}_i$};
   \node (zi4)  at (7.50,-1.00) {$z^{\ell_i}_i$};         
   \node (vi1) at (8.50,0.00) {$v_{i+1}$}; 
   \node (cr1) at (1,-4) {$c_r$};
   \node (cr2) at (3,-4) {$c_{r+1}$};
   \node (cr3) at (5,-4) {$c_{r+2}$};
   \node (cr4) at (6.8,-4) {$c_{r+3}$};
   \draw[oneblocked] (vi) -- (yi1);
   \draw[switch] (yi1) -- (yi2);
   \draw[switch] (yi1) -- (yi2);
   \draw[oneblocked] (yi2) -- (yi3);
   \draw[switch,-,path fading=east] (yi3) -- +(1.5,0);
   \draw[switch,-,path fading=west] (yi4) -- +(-1.25,0);
   \draw[oneblocked] (yi4) -- (vi1);
   \draw[oneblocked] (vi) -- (zi1);
   \draw[switch] (zi1) -- (zi2);
   \draw[oneblocked] (zi2) -- (zi3);   
   \draw[switch,-,path fading=east] (zi3) -- +( 1.5,0);
   \draw[switch,-,path fading=west] (zi4) -- +(-1.25,0);
   \draw[oneblocked] (zi4) -- (vi1);
   \draw[zeroblocked,-] (yi1) -- +(0.00,-1.4) -- (5,-0.4) -- (cr3);
   \draw[zeroblocked,-] (yi2) -- +(0.00,-1) -- (6.8,0) -- (cr4);
   \draw[zeroblocked,-,path fading=south] (yi3) -- +(1.5,-0.5);
   \draw[zeroblocked,-,path fading=south] (yi4) -- +(0.00,-1);
   \draw[zeroblocked,-] (zi1) -- (cr1);
   \draw[zeroblocked,-] (zi2) -- (cr2);
   \draw[zeroblocked,-,path fading=south] (zi3) -- +(0.00,-1.25);
   \draw[zeroblocked,-,path fading=south] (zi4) -- +(0.00,-1.25);
   \draw[oneblocked,-,path fading=west] (vi) -- +(-0.75,0.75);
   \draw[oneblocked,-,path fading=west] (vi) -- +(-0.75,-0.75);
   \draw[oneblocked,-,path fading=east] (vi1) -- +(0.75,0.75);
   \draw[oneblocked,-,path fading=east] (vi1) -- +(0.75,-0.75);
   \draw[zeroblocked,-,path fading=north] (cr1) -- +(.5,1);
   \draw[zeroblocked,-,path fading=north] (cr1) -- +(-.5,1);
   \draw[zeroblocked,-,path fading=north] (cr2) -- +(-.5,1);
   \draw[zeroblocked,-,path fading=north] (cr2) -- +(.5,1);
   \draw[zeroblocked,-,path fading=north] (cr3) -- +(-.5,1);
   \draw[zeroblocked,-,path fading=north] (cr3) -- +(.5,1);
   \draw[zeroblocked,-,path fading=north] (cr4) -- +(-.5,1);
   \draw[zeroblocked,-,path fading=north] (cr4) -- +(.5,1);
   \path (0,-4) node[draw=none] {$\ldots$};
   \path (7.8,-4) node[draw=none] {$\ldots$};
   \path (1.8,0.69) node[draw=none] {$\mathcal{E}_1$};
   \path (3.3,0.69) node[draw=none] {$\mathcal{E}_2$};
   \path (1.3,-2.5) node[draw=none] {$\mathcal{E}_3$};
  \end{tikzpicture}
  \caption{Schematic representation of the gadget for variable $x_i$, which appears negated in clause~$C_r$ and positive in clause $C_{r+2}$ among others. \label{figure:connectivity_variables}}
 \end{figure}
 
  We call an $(s^+,s^-)$-path that contains no node from $V_3$ a \emph{variable path} and an $(s^+,s^-)$-path with no node from $V_4$ a \emph{clause path}. 
  An $(s^+,s^-)$-path containing edges of type $\mathcal{E}_2$ and $\mathcal{E}_3$ does not connect $s^+$ with $s^-$ in $[t_1,t_1+1]$ or in $[t_2,t_2+1]$.   
  Therefore, all paths other than variable paths and relevant clause paths are irrelevant for the connectivity of~$s^+$ with~$s^-$.
  
  When maintaining all edges of type $\mathcal{E}_1$ in $[t_1,t_2]$, we have connectivity in $[t_2,t_2+1]$ exactly on all variable paths. 
  Conversely, maintaining all edges of type $\mathcal{E}_1$ in $[t_1+1,t_2+1]$ yields connectivity in $[t_1,t_1+1]$ exactly on all relevant clause paths. 
  On the other hand, any clause path can connect $s^+$ with $s^-$ only in $[t_1,t_1+1]$ and any variable path only in $[t_2,t_2+1]$.
  We now claim that there is a schedule with total connectivity time greater than one if and only if the \THREESAT-instance is a \YES-instance.
  
  Let $S$ be a schedule with total connectivity time greater than one.
  Then there is a variable path $P^v$ with positive connectivity time in $[t_2,t_2+1]$ and a clause path $P^c$ with positive connectivity time in $[t_1,t_1+1]$.
  As the total connectivity time is greater than one, $P^c$ cannot walk through both the positive part ($y$ nodes) and the negative part ($z$ nodes) of the gadget for any variable $x_i$.
  This allows to assume w.l.o.g. that $P^v$ and $P^c$ are disjoint between $s'$ and $s^-$.
  Say $P^v$ and $P^c$ share an edge on the negative part ($z$ nodes) of the gadget for variable $x_i$. 
  Then we can redirect the variable path $P^v$ to the positive part ($y$ nodes) without decreasing the total connectivity time. The same works if they share an edge on the positive part.

  Now set $x_i$ to \SATFALSE if $P^v$ uses the nodes $y_i^1,\ldots,y_i^{k_i}$, that is the upper part of the variable gadget, and to \SATTRUE otherwise.
  With this setting, whenever $P^c$ uses edges of a variable gadget, e.g. the sequence $c_r, z_i^{2q-1}, z_i^{2q}, c_{r+1}$ for some $r,q$, disjointness of $P^v$ and $P^c$ implies that clause $C_r$ is satisfied with the truth assignment of variable $x_i$.
  Since every node pair $c_r, c_{r+1}$ is only connected with paths passing through variable gadgets, and at least one of them belongs to $P^c$ we conclude that every clause $C_r$ is satisfied.
 
  Consider a satisfying truth assignment.
  We define a schedule that admits a variable path $P^v$ with connectivity in $[t_2,t_2+1]$.
  This path $P^v$ uses the upper part ($y_i$-part) if $x_i$ is set to \SATFALSE\ and the lower part ($z_i$-part) if $x_i$ is set to \SATTRUE.
  That is, we maintain all edges of type $\mathcal{E}_1$ on the upper path ($y_i$-path) of the variable gadget for $x_i$ in $[t_1,t_2]$ if $x_i$ is \SATFALSE\ and in $[t_1+1,t_2+1]$ if $x_i$ is \SATTRUE. 
  Conversely, edges of type $\mathcal{E}_1$ on the lower path ($z_i$-path) of the variable gadget for $x_i$ are maintained in $[t_1,t_2]$ if $x_i$ is \SATTRUE\ and in $[t_1+1,t_2+1]$ if $x_i$ is \SATFALSE.
  This implies for the part of the gadget for $x_i$ that is not used by $P^v$ that the corresponding edges of type $\mathcal{E}_1$ are scheduled to allow connectivity during $[t_1,t_1+1]$.
  These edges can be used in a clause path to connect node $c_r$ with $c_{r+1}$ for some clauses $C_r$ that is satisfied by the truth assignment of $x_i$. 
  Since all clauses are satisfied by some variable $x_i$ there exists a clause path $P^c$ admitting connectivity in $[t_1,t_1+1]$. 
  Therefore, the constructed schedule allows connectivity during both intervals $[t_1,t_1+1]$ and $[t_2,t_2+1]$.
  
  To show the inapproximability of \MINCONNECTIVITY, we reduce \THREESAT to this problem. We construct an instance of \MINCONNECTIVITY exactly the same way as we did above for \MAXCONNECTIVITY and set $t_1=0$, $t_2=1$, and $T=2$. By definition of the jobs, this results in a instance with only unit-sized jobs. As we discussed above, \YES-instances of \THREESAT result in a \MAXCONNECTIVITY instance with an objective value of 2. For $T = 2$, that means we have connectivity at all time points, and therefore an objective value of 0 for \MINCONNECTIVITY. \NO-instances of \THREESAT on the other result in \MAXCONNECTIVITY instance with an objective value of 1 -- for $T = 2$, this results in \MINCONNECTIVITY objective value of 1 as well. Due to the gap between 1 and 0, any approximation algorithm that outputs a solution within a factor of the optimum solution needs to decide \THREESAT.  \qed
\end{proof}

We reuse the construction in the proof of Theorem~\ref{thm:no-pmtnLowerBound1} repeatedly to obtain the following improved lower bound.

\begin{theorem}
	\label{thm:no-pmtnLowerBound2}
	Unless $\P = \NP$, there is no $(c\sqrt[3]{|E|})$-approximation algorithm for non-preemptive \MAXCONNECTIVITY, for some constant $c > 0$.
\end{theorem}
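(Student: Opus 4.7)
The plan is to amplify the factor-$2$ inapproximability gap from Theorem~\ref{thm:no-pmtnLowerBound1} into a gap of $\Omega(|E|^{1/3})$ by composing many coupled copies of its gadget. Given a \THREESAT instance $\varphi$ of size $N$, I would introduce a parameter $K$ (to be fixed later) and construct a graph $G$ consisting of $K$ copies $H_1, \ldots, H_K$ of the gadget, where each $H_j$ uses its own pair of ``opportunity'' time intervals $[t_1^{(j)}, t_1^{(j)}+1]$ and $[t_2^{(j)}, t_2^{(j)}+1]$; these $2K$ intervals are made pairwise disjoint within a polynomially bounded horizon. All copies share the same terminals $s^+, s^-$, and disconnector paths of the kind used in Theorem~\ref{thm:no-pmtnLowerBound1} prevent any connectivity outside the union of the $2K$ intervals.

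The YES side is then straightforward: from a satisfying assignment of $\varphi$, the construction of Theorem~\ref{thm:no-pmtnLowerBound1} yields a schedule in each $H_j$ that admits $2$ units of connectivity within its own interval pair, so the total connectivity time is $\Omega(K)$. The real work is in the NO case: without further coupling, every $H_j$ still independently admits $1$ unit of clause-path connectivity, giving $K$ units in total and only a factor-$2$ gap. To break this, I would identify structural resources across the copies --- most naturally the variable gadget, so that the truth assignment is encoded exactly once globally --- and route the clause paths of all copies through a small shared ``bottleneck'' subgraph whose maintenance requirements can be simultaneously met for all $K$ copies precisely when a consistent satisfying assignment is available, but for only $O(1)$ copies otherwise. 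Under this coupling the NO-case objective is capped at $O(1)$ while the YES-case objective remains $\Omega(K)$.

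Counting edges, each copy contributes $\Theta(N)$ edges and the shared components $O(N+K)$, so $|E| = \Theta(KN)$. Choosing $K = \Theta(\sqrt{N})$ gives $|E| = \Theta(N^{3/2})$ and a gap of $\Omega(\sqrt{N}) = \Omega(|E|^{1/3})$, so any $(c|E|^{1/3})$-approximation algorithm, for sufficiently small $c>0$, would decide \THREESAT in polynomial time. The main obstacle, and the step I expect to require most of the work, is the design and correctness proof of the shared bottleneck: the NO-case upper bound must hold against arbitrary adversarial schedules, which involves a delicate simultaneous accounting of the three edge types $\mathcal{E}_1, \mathcal{E}_2, \mathcal{E}_3$ and the shared structure across all $K$ copies, well beyond the single-copy analysis performed in Theorem~\ref{thm:no-pmtnLowerBound1}.
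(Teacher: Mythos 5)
Your high-level plan --- amplify the gap of Theorem~\ref{thm:no-pmtnLowerBound1} by composing many copies of its gadget, each with its own pair of unit time slots, and then balance the number of copies against the copy size to land on the exponent $1/3$ --- matches the spirit of the paper's proof, and your arithmetic ($K$ copies of size $\Theta(N)$, gap $K$ versus $|E|=\Theta(KN)$) is consistent with it. However, you have correctly located the crux of the argument and then left it unproven: the entire difficulty is the \NO-case coupling, and ``route the clause paths of all copies through a small shared bottleneck subgraph whose maintenance requirements can be met for only $O(1)$ copies otherwise'' is a specification of the desired property, not a construction or a proof. As stated, your sketch does not establish the theorem. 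Moreover, the particular mechanism you gesture at (globally sharing the variable gadgets so the assignment is encoded once) is problematic: a shared edge of type $\mathcal{E}_1$ has a \emph{single} non-preemptive maintenance interval, which cannot interact in the intended two-sided way with $K$ pairwise disjoint interval pairs simultaneously, so it is not clear the shared gadget still encodes a truth assignment per copy, nor that an adversarial schedule in the \NO case is capped at $O(1)$.

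The paper resolves exactly this difficulty with a different coupling, through the \emph{time structure} rather than through shared variable gadgets. It builds $n^2-n$ gates, one for each ordered pair $(i,j)$ of distinct unit slots in $\{0,\dots,n-1\}$, where gate $(i,j)$ admits variable paths only in $[i,i+1]$ and clause paths only in $[j,j+1]$, and it wires the gates so that any $s^+$-$s^-$ path realizing connectivity in slot $k$ must carry label $k$ and traverse a prescribed row/column of the gate matrix. Consequently, achieving connectivity in \emph{two distinct} slots $i\neq j$ forces two labelled paths that both traverse gate $(i,j)$, which by the single-copy analysis yields a satisfying assignment. Hence a \NO-instance gives total connectivity at most $1$ (not $O(1)$ per copy summed over copies, which is what defeats the naive approach you flagged), while a \YES-instance achieves $n$; with $\Theta(n^2)$ gates of size $\Theta(n)$ one gets $|E|=O(n^3)$ and the claimed $\Omega(\sqrt[3]{|E|})$ gap. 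Note also a subtlety your sketch does not address: one must rule out fractional connectivity spread across many slots by an adversarial schedule with non-integral start times; the paper handles this by invoking the result of Boland et al.\ that integral start times may be assumed without loss. To complete your proof you would need to supply a concrete bottleneck with a proven $O(1)$ cap against arbitrary schedules --- or adopt the pairwise-slot gate matrix, which sidesteps the need for such a bottleneck entirely.
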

\begin{proof}
We reuse the construction in the proof of Theorem~\ref{thm:no-pmtnLowerBound1} to  construct a network that has maximum connectivity time $n$ if the given \THREESAT instance is a \YES-instance and maximum connectivity time $1$ otherwise.
This implies that there cannot be an $(n-\epsilon)$-approximation algorithm for non-preemptive \MAXCONNECTIVITY, unless $\P = \NP$.
Here, $n$ is again the number of variables in the given \THREESAT instance.
Note that the construction in the proof of Theorem~\ref{thm:no-pmtnLowerBound1} has $\Theta(n)$ maintenance jobs and thus there exists a constant $c_1>0$ such that $|E| \leq c_1 \cdot n$.
In this proof, we will introduce $\Theta(n^2)$ copies of the construction and thus $|E| \leq c_2 \cdot n^3$ for some $c_2>0$, which gives that $n \geq c_3\sqrt[3]{|E|}$ for some $c_3>0$.
This gives the statement.

For the construction, we use $n^2-n$ copies of the \THREESAT-network from the proof of Theorem~\ref{thm:no-pmtnLowerBound1}, where each one uses different $(t_1,t_2)$-combinations with $t_1,t_2 \in \{0,\ldots,n-1 \}$ and $t_1\neq t_2$.
We use these copies as \THREESAT-gates and mutually connect them as depicted in Figure~\ref{figure:sat-gadgets}.
Recall that for one such \THREESAT-network we have the freedom of choosing the intervals $[t_1,t_1+1]$ and $[t_2,t_2+1]$, which are relevant for connectivity. 
This choice now differs for every $\THREESAT$-gate.

\begin{figure}[htb]
 \centering
 \includegraphics[width=\textwidth]{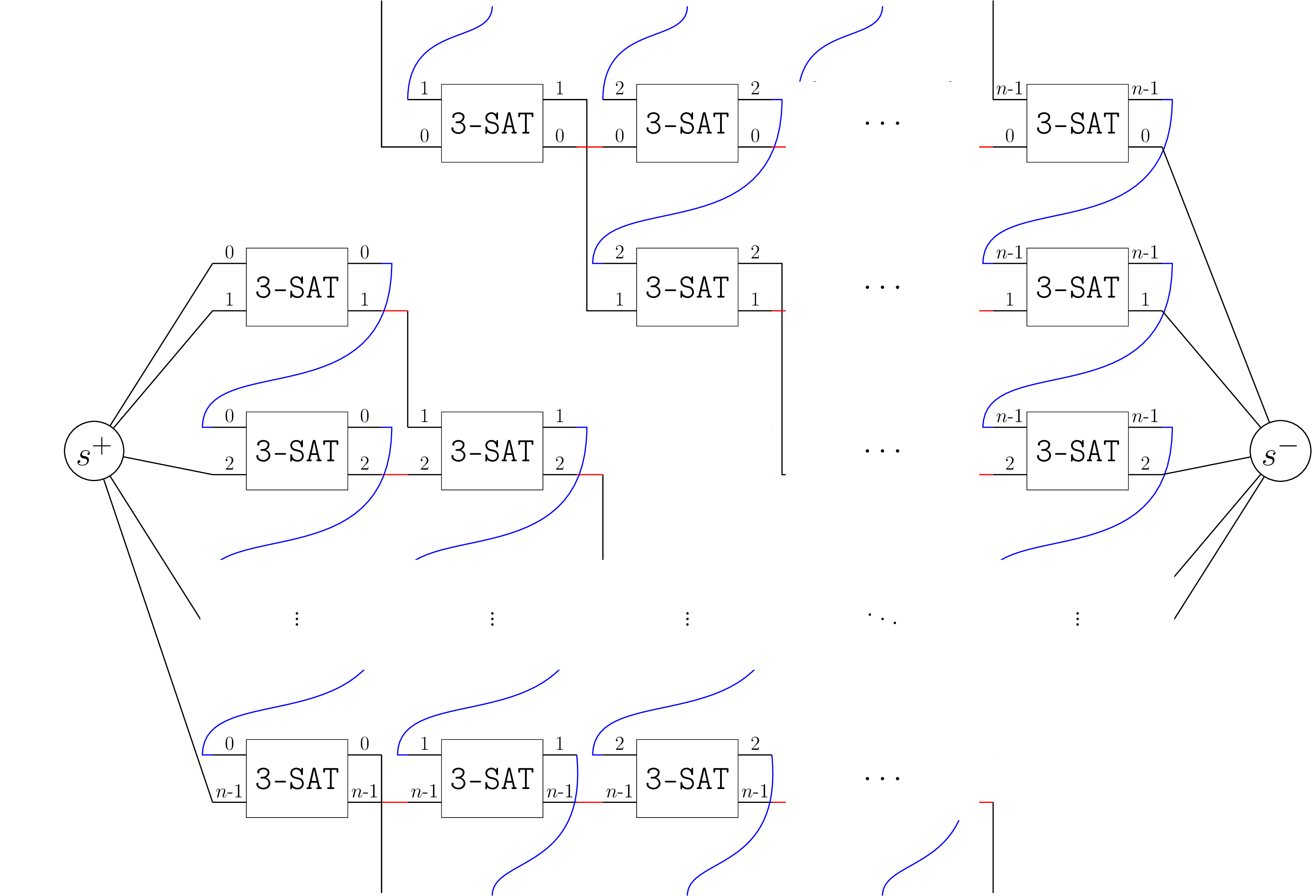}
 \caption{Schematic representation of the network of \THREESAT-gates.\label{figure:sat-gadgets}}
\end{figure}

Think of the construction as an $(n\times n)$-matrix $M$ with an empty diagonal.
Entry $(i,j)$, $i,j\in \{0,\ldots,n-1\}$, in $M$ corresponds to a \THREESAT-gate in that variable paths only exist in time slot $[i,i+1]$ and relevant clause paths exist only in $[j,j+1]$.
This is enforced by the edges of type $\mathcal{E}_2$, which prevent variable paths in $[j,j+1]$, and edges of type $\mathcal{E}_3$, which prevent relevant clause paths in $[i,i+1]$. 
Edges between the $s^+$-copy and $s'$-copy of the \THREESAT-gate$(i,j)$ prevent connectivity outside of $[i,i+1]$ and $[j,j+1]$.
Note that now $\mathcal{E}_1:=\{e\in E: r_e=i, d_e=j+1, p_e=j-i \}$ if $i<j$, and $\mathcal{E}_1:=\{e\in E: r_e=j, d_e=i+1, p_e=i-j \}$ if $i>j$.

The $s^+$-copy of the \THREESAT-gate$(i,j)$ is connected to two paths, where one of them allows connectivity only during $[i,i+1]$ and the other one only during $[j,j+1]$.
The same is done for the $s^-$-copy of the \THREESAT-gate$(i,j)$.
In Figure~\ref{figure:sat-gadgets}, this is illustrated by labels on the paths.
A label $i\in \{0,\ldots,n-1\}$ means, that this path allows connectivity only during $[i,i+1]$.
The upper path connected to a \THREESAT-gate specifies the time slot, where variable paths may exist, and the lower path specifies the time slot, where relevant clause paths may exist.
When following the path with label $k\in \{0,\ldots,n-1\}$, we pass the gadgets in column $j=0,\ldots,k-1$ on the lower path having $j$ on the upper path.
In column $k$, we walk through all gadgets on the upper path and then we proceed with column $j=k+1,\ldots,n-1$ on the lower path having $j$ again on the upper path.
Eventually, we connect the \THREESAT-gate$(n-1,k)$ to the vertex $s^-$.

Note that within \THREESAT-gate$(i,j)$ we have connectivity during $[i,i+1]$ \emph{and} $[j,j+1]$ if and only if the corresponding \THREESAT-instance is a \YES-instance. Also notice that we can assume due to~\cite{BolandNKK15} that all jobs start at integral times, which allows us to ignore schedules with fractional job starting times and therefore fractional connectivity within a time interval $[i,i+1]$. 
Now, if the \THREESAT-instance is a \YES-instance, there is a global schedule such that its restriction to every \THREESAT-gate$(i,j)$ allows connectivity during both intervals. 
Thus for each label $k\in \{0,\ldots,n-1\}$ there exists a path with this label that has connectivity during $[k,k+1]$.
This implies that the maximum connectivity time is $n$.

Conversely, suppose there exists a global schedule with connectivity during $[i,i+1]$ and $[j,j+1]$ for some $i \neq j$.
Then there must exist two paths $P_1, P_2$ from $s^+$ to $s^-$ with two distinct labels $i$ and $j$, each realizing connectivity during one of both intervals. 
By construction they walk through the \THREESAT-gate$(i,j)$. 
This implies by the proof of Theorem~\ref{thm:no-pmtnLowerBound1}, that the global schedule restricted to this gate corresponds to a satisfying truth assignment for the \THREESAT-instance.
That is, the \THREESAT-instance is a \YES-instance. 
With the previous observation, it follows that an optimal schedule has maximum connectivity time of $n$. \qed
\end{proof}

The results above hold for general graph classes, but even for graphs as simple as disjoint paths between $s$ and $t$, the problem remains strongly \NP-hard.

\begin{theorem}
 \label{theorem:connectivity}
 Non-preemptive \MAXCONNECTIVITY is strongly \NP-hard, and non-preemptive \MINCONNECTIVITY is inapproximable even if the given graph consists only of disjoint paths between $s$ and $t$.
\end{theorem}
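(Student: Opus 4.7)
The plan is to reduce from the strongly \NP-hard problem 3-\PARTITION. Given $3m$ items with sizes $a_1, \ldots, a_{3m}$ each in $(B/4, B/2)$ and $\sum_j a_j = mB$, I would construct a disjoint-paths instance over a time horizon $T = \Theta(mB)$ organized into $m$ ``bin'' intervals of length $B$, interleaved with short ``separator'' intervals of length one.

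First, I would add a \emph{separator path}: a disjoint $s$-$t$ path whose edges are tight (window length equal to processing time) and positioned at each separator, forcing the path to be busy precisely during every separator. For a zero-disconnect schedule, every separator time must therefore be covered by some other path being unblocked at that time.

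For each item $j$, I would add an \emph{item path} with a single edge of processing time $a_j$ and window $[0, T]$. Such a path is unblocked at time $t$ iff the item's scheduled interval does not contain $t$; combined with the separator path, this will force each item's interval to lie entirely within a single bin, since any item straddling a separator would block it while $a_j < B/2$. I would then add \emph{bin-monitor paths} enforcing the converse direction: each monitor is designed so that its busy structure intersects the bins, and is only defeated at a point if the items assigned to that bin's time window collectively cover it, thereby forcing per-bin capacity exactly $B$. Combining the bin membership from the separator argument with the capacity from the monitors, and using $a_j \in (B/4, B/2)$, forces each bin to receive exactly three items summing to $B$, giving a valid 3-partition; conversely, any 3-partition yields a zero-disconnect schedule by placing each item in its assigned bin.

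The hard part will be constructing the bin-monitor paths correctly, since on disjoint paths there is no direct resource competition between items on distinct paths --- coupling must be engineered through the global disconnect objective alone. I expect to handle this by carefully choosing forced edges on each monitor path whose positions interact with the items' possible intervals through the disconnect condition, while verifying that no ``free'' coverage from unrelated paths can bypass the bin constraints. Once the construction is verified, a zero-disconnect schedule is in bijection with a valid 3-partition; this gives strong \NP-hardness of non-preemptive \MAXCONNECTIVITY (the optimum equals $T$ if and only if the 3-\PARTITION instance is a YES-instance), and since the same gap separates optimum zero from positive, inapproximability of non-preemptive \MINCONNECTIVITY on the class of disjoint-paths graphs.
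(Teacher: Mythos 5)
There is a genuine gap, and it sits exactly where you flagged ``the hard part.'' On a graph consisting of disjoint $s$-$t$ paths, the network is connected at time $t$ if and only if \emph{at least one} path is entirely free of maintenance at $t$; disconnection requires \emph{all} paths to be blocked simultaneously. This makes the zero-disconnect condition a pure disjunction over paths at every time point. Two consequences break your construction. First, your claim that an item job straddling a separator forces a disconnection is false: it only blocks that one item's path during the separator, while the other $3m-1$ item paths (each blocked for only $a_j \ll T$ time in total) remain free, so the separator is still covered and nothing forces items into bins. Second, and more fundamentally, no choice of bin-monitor paths can enforce the capacity constraint $\sum_{j \in \text{bin } i} a_j \le B$: with each item on its own path, whether three, four, or all items are scheduled inside bin $i$ is invisible to the objective, because the objective cannot ``sum'' blocked time across distinct paths --- it only detects the time points at which every path is blocked at once. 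The additive structure of 3-\PARTITION{} simply does not match the disjunctive structure of connectivity on disjoint paths, so the coupling you hope to engineer through the global disconnect objective alone cannot exist in this form. (Contrast this with the paper's Theorem~\ref{theorem:mixed}, where a \PARTITION-style reduction does work precisely because all jobs share a \emph{single} path and busy times genuinely add up.)

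The paper's proof of Theorem~\ref{theorem:connectivity} avoids this by reducing from \THREESAT, whose constraints are themselves disjunctions and therefore align with the connectivity condition. It builds two paths per variable (one per literal), gives each a long ``variable job'' whose position (early interval $B$ versus late interval $D$) encodes the truth value, and then, for each constraint to be enforced --- each literal's positioning, each variable's consistency, and each clause --- reserves a \emph{dedicated unit time slot} during which tight unit blocking jobs occupy every path except the few whose freedom would witness that constraint. Connectivity at that slot then holds if and only if the corresponding disjunction is satisfied. If you want to salvage a number-based reduction you would have to make jobs compete on a common path rather than on disjoint ones; otherwise the clean route is the clause-slot construction above.
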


\begin{proof}

We proof this result by reduction from the strongly \NP-complete \THREESAT problem.

\introduceproblem{\THREESAT}
{Clauses $C_1,\dots,C_m$ of exactly three variables in $x_1, \dots, x_n$.}
{Is there a truth assignment to the variables in $x_1, \dots, x_n$ that satisfies all clauses?}

We construct a network with $2n$ paths from $s^+$ to $s^-$, two for each variable of the \THREESAT{} instance.
Let $P_i$ and~$\bar{P}_i$ denote the two paths for variable~$x_i$.
We will introduce several maintenance jobs for each path, understanding that each new job is associated with a different edge of the path.
Since the ordering of these edges does not matter, we will directly associate each job with a path without explicitly specifying the respective edge of the job.
The network will allow a schedule that maintains connectivity at all times if and only if the \THREESAT{} instance is satisfiable. 

For convenience, assume that $n \geq m$, otherwise we introduce additional dummy variables.
We define a time horizon~$T = 8n$ that we subdivide into five intervals $A = [0, 2n), B = [2n, 3n), C = [3n, 5n), D = [5n, 6n), E = [6n, 8n]$. We will use these intervals now when defining jobs.

\paragraph{Jobs representing variables.}
For each variable~$x_i$, we define a job each on paths~$P_i$ and~$\bar{P}_i$ with the time window~$[0,T]$ and processing time~$3n$.
We will ensure that neither job is scheduled to cover the time interval $C$ entirely in any feasible schedule for the connectivity problem.
This implies that a variable job either covers $B$ or $D$ without intersecting the other.
The job on~$P_i$ (resp.~$\bar{P}_i$) covering $B$ will correspond to the literal~$x_i$ (resp.~$\bar{x}_i$) being set to \SATTRUE.
We will of course ensure that not both literals can be set to \SATTRUE{} simultaneously, but we will allow both to be \SATFALSE, which simply means that the truth assignment remains satisfying, no matter how the variable is set.

\paragraph{Jobs needed to translate schedules into variable assignments.}
In the following, we introduce blocking jobs that all have a time window of unit length and unit processing time.
In this way, introducing a blocking job at time~$t$ simply renders the corresponding path unusable during the time interval~$[t,t+1)$.
To ensure that the variable jobs for variable~$x_i$ do not cover~$C$ completely, we add a blocking job at time~$t_i = 3n + 2(i-1)$ to all paths except~$P_i$ and a blocking job at time~$t'_i = 3n + 2(i-1) + 1$ to all paths except~$\bar{P}_i$.
The first job forces the variable job for the literal~$x_i$ not to cover~$C$ completely, since otherwise connectedness is interrupted during the time interval~$[t_i, t'_i)$.
The second blocking job accomplishes the same for the literal~$\bar{x}_i$.
Note that the blocking jobs for each literal occupy a unique part of the time window~$C$.

\paragraph{Jobs preventing variables from being 0 and 1 at the same time.}
In order to force at most one literal of each variable~$x_i$ to be set to \SATTRUE, we introduce a blocking job at time~$t''_i = 2n + (i-1)$ on all paths except $P_i$ and $\bar{P}_i$.
These blocking jobs ensure that either path $P_i$ or $\bar{P}_i$ must be free during time~$[t''_i, t''_i + 1)$, which means not both variable jobs may be scheduled to cover~$B$ (recall each variable job either covers~$B$ or~$D$ without intersecting the other).
Again, the blocking jobs for each variable occupy a unique part of the time window~$B$.

\paragraph{Jobs enforcing that at least one literal of each clause is true.}
For each clause~$C_j$ we introduce a blocking job at time~$5n + j$ on each path except the three paths that correspond to literals in~$C_j$. Figure~\ref{fig:parallelpathshard} shows this construction for variable $x_i$ and paths $P_i,\bar{P_i}$.
	
 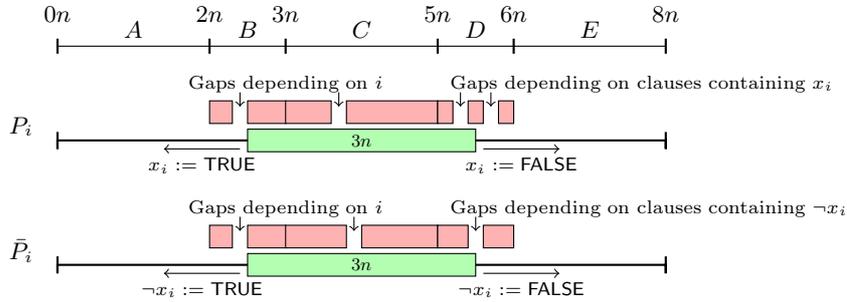
\begin{figure}[hbtp]
  \centering
  \begin{tikzpicture}
   \draw[-] (0,-1) -- (8,-1);
   \foreach \coo in {0,2,3,5,6,8} {
    \draw[thick] (\coo,-1.1) -- (\coo,-.9) node[above=3pt] {$\coo n$};
   }
	 \node[anchor=south] (text) at (1,-1) {$A$};
	 \node[anchor=south] (text) at (2.5,-1) {$B$};
	 \node[anchor=south] (text) at (4,-1) {$C$};
	 \node[anchor=south] (text) at (5.5,-1) {$D$};
	 \node[anchor=south] (text) at (7,-1) {$E$};
   \begin{scope}[yscale=0.75,yshift=-5mm] 
    \node[anchor=south] (text) at (3.0,-1.8) {\scriptsize Gaps depending on $i$}; \draw[->] (2.4,-1.65) -- (2.4,-1.9); \draw[->] (3.7,-1.65) -- (3.7,-1.9);
    \node[anchor=south west] (text) at (5.05,-1.8) {\scriptsize Gaps depending on clauses containing $x_i$}; \draw[->] (5.3,-1.65) -- (5.3,-1.9); \draw[->] (5.7,-1.65) -- (5.7,-1.9);
    \draw[fill=red!30] (2.0,-1.8) rectangle (2.3,-2.2); \draw[fill=red!30] (2.5,-1.8) rectangle (3.0,-2.2);     
    \draw[fill=red!30] (3.0,-1.8) rectangle (3.6,-2.2); \draw[fill=red!30] (3.8,-1.8) rectangle (5.0,-2.2); 
    \draw[fill=red!30] (5.0,-1.8) rectangle (5.2,-2.2); \draw[fill=red!30] (5.4,-1.8) rectangle (5.6,-2.2); \draw[fill=red!30] (5.8,-1.8) rectangle (6.0,-2.2);  
    \draw[fill=green!30] (2.5,-2.3) rectangle (5.5,-2.7); \node[anchor=base,scale=0.75] (text) at (4.0,-2.6) {$3n$}; 
    \draw[thick] (0,-2.35) -- (0,-2.65) --  (0,-2.5) -- (2.5,-2.5); \draw[thick] (8,-2.35) -- (8,-2.65) --  (8,-2.5) -- (5.5,-2.5);
    \draw[<-] (1.4,-2.65) -- node[auto,swap] {\scriptsize $x_i := \SATTRUE$} (2.4,-2.65);
    \draw[->] (5.6,-2.65) -- node[auto,swap] {\scriptsize $x_i := \SATFALSE$} (6.6,-2.65);
    \node[anchor=east] (text) at (-0.2,-2.25) {$P_i$};
    \begin{scope}[yshift=-22mm]
    \node[anchor=south] (text) at (3.0,-1.8) {\scriptsize Gaps depending on $i$}; \draw[->] (2.4,-1.65) -- (2.4,-1.9); \draw[->] (3.9,-1.65) -- (3.9,-1.9);
    \node[anchor=south west] (text) at (5.05,-1.8) {\scriptsize Gaps depending on clauses containing $\neg x_i$}; \draw[->] (5.5,-1.65) -- (5.5,-1.9); 
    \draw[fill=red!30] (2.0,-1.8) rectangle (2.3,-2.2); \draw[fill=red!30] (2.5,-1.8) rectangle (3.0,-2.2);     
    \draw[fill=red!30] (3.0,-1.8) rectangle (3.8,-2.2); \draw[fill=red!30] (4.0,-1.8) rectangle (5.0,-2.2); 
    \draw[fill=red!30] (5.0,-1.8) rectangle (5.4,-2.2); \draw[fill=red!30] (5.6,-1.8) rectangle (6.0,-2.2); 
    \draw[fill=green!30] (2.5,-2.3) rectangle (5.5,-2.7); \node[anchor=base,scale=0.75] (text) at (4.0,-2.6) {$3n$}; 
    \draw[thick] (0,-2.35) -- (0,-2.65) --  (0,-2.5) -- (2.5,-2.5); \draw[thick] (8,-2.35) -- (8,-2.65) --  (8,-2.5) -- (5.5,-2.5);
    \draw[<-] (1.4,-2.65) -- node[auto,swap] {\scriptsize $\neg x_i := \SATTRUE$} (2.4,-2.65);
    \draw[->] (5.6,-2.65) -- node[auto,swap] {\scriptsize $\neg x_i := \SATFALSE$} (6.6,-2.65);
    \node[anchor=east] (text) at (-0.2,-2.25) {$\bar{P_i}$};
    \end{scope}
   \end{scope}
  \end{tikzpicture}
  \caption{The paths $P_i,\bar{P_i}$ for variable $x_i$. The axis marks the times from $0$ to $8n$.}\label{fig:parallelpathshard}
 \end{figure}		

These blocking jobs force that at least one of the literals of the clause has to be set to \SATTRUE, i.e., be scheduled to overlap~$B$ instead of~$D$, otherwise connectivity is interrupted during time~$[5n + j, 5n + j + 1)$.
Note again that the blocking jobs for each clause occupy a unique part of the time window~$D$.

It is now easy to verify that each satisfying truth assignment leads to a feasible schedule without disconnectedness for the connectivity problem and vice versa.

We can use this instance construction for both \MAXCONNECTIVITY and \MINCONNECTIVITY. On the one hand, we have that \YES-instances of \THREESAT result in instances with a \MAXCONNECTIVITY objective value of $T$ and a \MINCONNECTIVITY objective value of $0$, and on the other hand we have that \NO-instances of \THREESAT result in instances with a \MAXCONNECTIVITY objective value $< T$ and a \MINCONNECTIVITY objective value $> 0$. This gives us the strong \NP-hardness for \MAXCONNECTIVITY; for \MINCONNECTIVITY, we get the inapproximability result since the optimal objective value is 0 here, similar to Theorem~\ref{thm:no-pmtnLowerBound1}.\qed
\end{proof}

We give an algorithm that computes an $(\ell+1)$-approximation for non-preemptive \MAXCONNECTIVITY, where $\ell \leq |E|$ is the number of different time points~$d_e-p_e, e\in E$. 
The basic idea is that we consider a set of~$\ell+1$ feasible maintenance schedules, whose total time of connectivity upper bounds the maximum total connectivity time of a single schedule. 
Then the schedule with maximum connectivity time among our set of $\ell+1$ schedules is an~$(\ell+1)$-approximation.

The schedules we consider start every job either immediately at its release date, or at the latest possible time. In the latter case it finishes exactly at the deadline. 
More precisely, for a fixed time point~$t$, we start the maintenance of all edges $e\in E$ with $d_e-p_e \geq t$ at their latest possible start time~$d_e-p_e$. 
All other edges start maintenance at their release date~$r_e$.
This yields at most~$\ell+1 \leq |E|+1$ different schedules $S_t$, as for increasing~$t$, each time point where~$d_e-p_e$ is passed for some edge~$e$ defines a new schedule. 
Algorithm~\ref{alg:nonpreemptive_connectivity_approximation} formally describes this procedure, where $E(t):=\{e\in E: e \text{ is not maintained at } t\}$.

\begin{algorithm}
	\begin{algorithmic}[1]
		\STATE Let~$t_1< \dots< t_\ell$ be all different time points~$d_e-p_e, e\in E$, $t_0=0$ and $t_{\ell+1}=T$.
		\STATE Let~$S_i$ be the schedule, where all edges $e$ with~$d_e-p_e<t_i$ start maintenance at~$r_e$ and all other edges at~$d_e-p_e$, $i =1,\ldots, \ell+1$.
		\STATE For each $S_i$, initialize total connectivity time~$c(t_i) \leftarrow 0, i =1,\ldots, \ell+1$.
		\FOR {$i=1$ to $\ell+1$}
			\STATE Partition the interval~$[t_{i-1}, t_i]$ into subintervals such that each time point~$r_e, r_e+p_e, d_e$, $e\in E$, in this interval defines a subinterval bound.
			\FORALL {subintervals~$[a, b]\subseteq [t_{i-1}, t_i]$}
				\IF{$(V,E(1/2\cdot(a+b)))$ contains an~$(s^+,s^-)$-path for $S_i$}
					\STATE Increase~$c(t_i)$ by~$b-a$.
				\ENDIF
			\ENDFOR
		\ENDFOR
		\RETURN Schedule~$S_i$ for which~$c(t_i), i=1,\ldots,\ell+1,$ is maximized.
	\end{algorithmic}
	\caption{Approx. Algorithm for Non-preemptive \MAXCONNECTIVITY}
	\label{alg:nonpreemptive_connectivity_approximation}
\end{algorithm}

Algorithm~\ref{alg:nonpreemptive_connectivity_approximation} considers finitely many intervals, as all (sub-)interval bounds are defined by a time point~$r_e, r_e+p_e, d_e-p_e$ or~$d_e$ of some~$e\in E$. 
As we can check the network for $(s^+,s^-)$-connectivity in polynomial time, and the algorithm does this for each (sub-)interval, Algorithm~\ref{alg:nonpreemptive_connectivity_approximation} runs in polynomial time.

\begin{theorem}
  \label{thm:napprox}
  Algorithm~\ref{alg:nonpreemptive_connectivity_approximation} is an~$(\ell+1)$-approximation algorithm for non-pre\-emptive \MAXCONNECTIVITY on general graphs, with $\ell\leq |E|$ being the number of different time points~$d_e-p_e, e\in E$.
\end{theorem}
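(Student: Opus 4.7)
\medskip
\noindent\textbf{Proof plan.} The plan is to establish the guarantee by a per-interval averaging argument. Let $S^*$ be an optimal schedule with start times $s_e^*$, and let $C_i^*$ be the connectivity time of $S^*$ inside $[t_{i-1}, t_i]$, so that $\sum_{i=1}^{\ell+1} C_i^* = \mathrm{OPT}$. I will show that $c(t_i) \geq C_i^*$ for every $i$; summing yields $\sum_i c(t_i) \geq \mathrm{OPT}$, and since the algorithm returns the schedule $S_{i^*}$ that maximizes $c(t_{i^*})$, its total connectivity time is at least $c(t_{i^*}) \geq \mathrm{OPT}/(\ell+1)$.

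The heart of the argument is the following pointwise claim: for any $t \in [t_{i-1}, t_i)$ at which $S^*$ has an $(s^+,s^-)$-path $P$, every edge of $P$ is also available at $t$ in $S_i$, so $S_i$ provides connectivity at $t$. I will prove this edge by edge. Fix $e \in P$; if $t \notin [r_e, d_e]$ the edge is trivially available in any schedule, so assume $r_e \leq t \leq d_e$. Since $e$ is available at $t$ in $S^*$, either $s_e^* > t$ or $s_e^* + p_e < t$.

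In the first case, $d_e - p_e \geq s_e^* > t \geq t_{i-1}$. The crucial observation is that $d_e - p_e$ coincides with one of the breakpoints $t_1, \ldots, t_\ell$, and no $t_j$ lies strictly between $t_{i-1}$ and $t_i$; hence $d_e - p_e \geq t_i$, so $S_i$ schedules $e$ late in $[d_e - p_e, d_e] \subseteq [t_i, T]$, which does not contain $t < t_i$. In the second case, $r_e + p_e \leq s_e^* + p_e < t$; then if $S_i$ schedules $e$ early, the window $[r_e, r_e + p_e]$ lies strictly before $t$, while if $S_i$ schedules $e$ late, then $d_e - p_e \geq t_i > t$ and $[d_e - p_e, d_e]$ lies strictly after $t$. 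Either way $e$ is available at $t$ in $S_i$, which proves the pointwise claim; integrating it over $t \in [t_{i-1}, t_i]$ gives $c(t_i) \geq C_i^*$.

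The main obstacle I anticipate is the discrete indexing step in the first case: the argument depends crucially on the fact that the algorithm's breakpoints $t_1, \ldots, t_\ell$ contain \emph{every} distinct value $d_e - p_e$. This is what yields the strengthening from $d_e - p_e > t_{i-1}$ to $d_e - p_e \geq t_i$ and prevents an edge $S^*$ schedules late from being pushed by $S_i$ into an early slot that would overlap $t$. A minor auxiliary verification is that the midpoint test in the inner loop correctly computes $c(t_i)$: within each subinterval $[a,b] \subseteq [t_{i-1}, t_i]$ the set of edges currently under maintenance in $S_i$ is constant, since early-scheduled edges have boundaries $r_e$ and $r_e + p_e$ that are already among the chosen breakpoints, and late-scheduled edges satisfy $d_e - p_e \geq t_i \geq b$, so their entire maintenance window lies outside $[t_{i-1}, t_i)$.
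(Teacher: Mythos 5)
Your proposal is correct and follows essentially the same route as the paper: both partition $[0,T]$ at the distinct latest start times, argue edge by edge that within $[t_{i-1},t_i]$ the schedule $S_i$ is connected whenever the optimum is (late-schedulable edges with $d_e-p_e\geq t_i$ avoid the interval entirely, while the rest are forced to occupy $[t_{i-1},r_e+p_e]$ under any feasible schedule), and then conclude $\mathrm{OPT}\leq\sum_i c(t_i)\leq(\ell+1)\,\mathrm{ALG}$ by averaging. Your explicit pointwise/per-path formulation and the check of the midpoint test are slightly more detailed than the paper's wording but not a different argument.
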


\begin{proof}
 By construction, all schedules~$S_i, i=1,\ldots,\ell+1,$  are feasible and the solution returned has a connectivity time of $\max_{i=1,\dots,\ell+1} c(t_i)$, with $c(t_i)$ being the connectivity time of schedule $S_i$.

 The schedule $S_i$, $i=1,\dots,\ell+1$ is choosen in such a way that the connected time in the interval $[t_{i-1},t_i]$ is maximized. To see this, we need to consider two types of jobs. First, all jobs on edges $e\in E$ with $d_e-p_e \geq t_i$ can be scheduled outside of $[t_{i-1},t_i]$, which is definitely a correct choice in order to maximize the connectivity time in $[t_{i-1},t_i]$. Second, for all edges $e\in E$ with $d_e-p_e < t_i$, we know due to the definition of $t_{i-1}$ that $r_e \leq d_e-p_e \leq t_{i-1}$. Thus, scheduling these jobs at $r_e$ guarantees the least reduction in connectivity time in $[t_{i-1},t_i]$. More precisely, this scheduling disrupts connectivity in the interval $[t_{i-1},r_e+p_e]$ if $t_{i-1} \leq r_e+p_e$, and otherwise not at all. However, all other feasible schedulings must also disrupt connectivity in this interval -- scheduling the job earlier than $r_e$ is not possible, and neither is scheduling the job later than $d_e-p_e \leq t_{i-1}$. Thus, schedule $S_i$ has the maximal connectivity time in $[t_{i-1},t_i]$.
 
 Since the intervals $[t_{i-1},t_i]$, $i=1,\dots,\ell+1$ partition the complete time window $[0,T]$, this allows us to bound the value of the optimal solution $\text{OPT}$ by
 \begin{equation}
  \text{OPT} \leq \sum_{i=1}^{\ell+1} c(t_i) \leq (\ell+1)\max_{i=1,\dots,\ell+1} c(t_i) = (\ell+1)\text{ALG}
\end{equation}	
 with ALG being the value of a solution returned by Algorithm~\ref{alg:nonpreemptive_connectivity_approximation}. This gives us an approximation guarantee of $\ell+1$ and completes our proof.\qed
\end{proof}
\section{Power of Preemption\label{sec:pop}}

We first focus on \MINCONNECTIVITY on a path and analyze how much we can gain by allowing preemption.
First, we show that there is an algorithm that computes a non-preemptive schedule whose value is bounded by $O(\log |E|)$ times the value of an optimal preemptive schedule. 
Second, we argue that one cannot gain more than a factor of $\Omega(\log |E|)$ by allowing preemption.
\begin{theorem}\label{thm:powerofpreemption}
The power of preemption is $\Theta(\log |E|)$ for \MINCONNECTIVITY on a path.	
\end{theorem}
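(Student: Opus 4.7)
The theorem combines two bounds: an upper bound showing that every path instance admits a non-preemptive schedule within an $O(\log |E|)$ factor of the preemptive optimum, and a matching lower bound via an explicit family of instances.

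For the upper bound, I exploit that \MINCONNECTIVITY on a path equals the minimization of total busy time under unlimited parallelism (no machine is idle only when some edge is under maintenance). I partition the edges into $O(\log|E|)$ classes according to processing time; class $C_k$ collects edges with $p_e \in (2^{k-1}, 2^k]$. Within each class all processing times agree up to a factor of two, and I build a non-preemptive schedule in the spirit of Algorithm~\ref{alg:nonpreemptive_connectivity_approximation} whose busy time is at most a constant times the preemptive optimum restricted to that class; the key fact is that near-uniform lengths allow aligning jobs either at their release dates or at their latest start times while paying only a constant overhead. Since the preemptive optimum restricted to a subset of jobs is bounded by the overall preemptive optimum, concatenating the per-class schedules yields a non-preemptive schedule of busy time $O(\log|E|)\cdot\mathrm{OPT}_{\mathrm{pre}}$. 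To replace $\log p_{\max}$ by $\log|E|$ in the bucket count, I argue that edges whose processing time is below a $1/\mathrm{poly}(|E|)$ fraction of $\mathrm{OPT}_{\mathrm{pre}}$ may be appended at the end of the schedule at negligible cost and discarded from the classification.

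For the lower bound, I construct a family of instances with $K=\Theta(\log|E|)$ scales. The design at scale $k\in\{0,\dots,K-1\}$ introduces jobs of processing time $2^k$ whose windows are arranged so that (i) a preemptive schedule can interleave them with a global set of forced unit ``blocker'' slots of total length $O(K)$, achieving total busy time $O(K)$, while (ii) any non-preemptive schedule must contribute an additional independent length-$\Omega(1)$ busy interval per scale. The blueprint pairs a ``free'' job of processing time $\Theta(K)$ with forced unit blockers placed at time points $2^0,2^1,\dots,2^{K-1}$: preemptively the free job splits perfectly across all blockers so that its busy time is absorbed into them, whereas any non-preemptive placement of the free job as one contiguous interval can cover only $O(1)$ blockers at a given scale. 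Nesting this gadget across $K$ levels with windows that forbid cross-scale overlap of non-preemptive placements forces the $\Omega(\log|E|)$ blow-up.

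The upper bound is relatively routine once the geometric bucketing is in place; the subtle point there is the constant-factor per-class approximation, which leans on the bounded spread of processing times within a class. The main obstacle, however, is the lower bound: the construction must be calibrated so that the optimal non-preemptive schedule cannot beneficially reuse intervals across different scales, ensuring that the per-scale loss accumulates rather than cancels, which will require a careful choice of release dates and deadlines at each scale.
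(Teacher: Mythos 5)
Your upper bound takes a genuinely different route from the paper and can be made to work. The paper recursively splits the \emph{preemptive maintenance profile} at its measure-median $\bar{t}$, schedules all jobs whose windows contain $\bar{t}$ in a reserved interval of length $2p_{\max}$, and recurses; a binary-tree argument (each node schedules at least one job, cost $\leq 2a$ per level) gives $O(\log|E|)$. Your geometric bucketing by processing time instead reduces everything to a constant-factor claim for jobs of near-uniform length; that claim is true but you only assert it, and it is the crux. To close it: let $M$ be the preemptive profile with $|M|=a$, place breakpoints at $M$-measure $p,2p,3p,\dots$; every job with $p_e\in(p,2p]$ has $|M\cap[r_e,d_e]|\geq p_e>p$, hence its window contains a breakpoint, and scheduling it contiguously around that breakpoint costs at most $4p$ per breakpoint, i.e.\ $O(a)$ in total. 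Your handling of the range of scales (discarding jobs with $p_e\leq \mathrm{OPT}/|E|^2$ at total extra cost $\leq \mathrm{OPT}$, and $p_{\max}\leq \mathrm{OPT}$) correctly brings the number of classes down to $O(\log|E|)$. Note also that the reference to Algorithm~\ref{alg:nonpreemptive_connectivity_approximation} is misleading: that algorithm is for \MAXCONNECTIVITY{} and only guarantees a factor $\ell+1$, not a constant.

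The lower bound, however, has a genuine quantitative gap. As stated, your gadget gives a preemptive optimum of $O(K)$ (the total length of the forced unit blockers) while the non-preemptive schedule pays ``an additional length-$\Omega(1)$ busy interval per scale,'' i.e.\ an extra $\Omega(K)$ over $K$ scales. That is $O(K)$ versus $O(K)+\Omega(K)$ --- a constant-factor separation, not $\Omega(\log|E|)$. For a logarithmic gap you need the non-preemptive cost to be $\Omega(K)$ times the preemptive cost, and your bookkeeping never achieves that; you yourself flag the cross-scale calibration as unresolved. The paper's construction is a harmonic one: level $i$ ($1\leq i\leq\ell$) carries $i$ tight jobs of length $P/i$ exactly tiling $[0,P]$ in measure, and gaps of length $P$ are inserted wherever a deadline meets a release date. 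The preemptive optimum stays $P$, while non-preemptively the level-$1$ job absorbs at most one job from each other level, the leftover level-$2$ job then costs an extra $P/2$, and so on, forcing cost $\sum_{i=1}^{\ell}P/i=\Omega(P\log\ell)$ with $\ell=\Theta(\sqrt{|E|})$. The essential feature your sketch is missing is that the \emph{per-scale} extra cost must itself be a constant fraction of the preemptive optimum (here $P/i$ summing harmonically), not an absolute constant.
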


\begin{proof}
Observe that if at least one edge of a path is maintained at time $t$, then the whole path is disconnected at $t$.
We give an algorithm for \MINCONNECTIVITY on a path that constructs a non-preemptive schedule with cost at most $O(\log |E|)$ times the cost of an optimal preemptive schedule.

We first compute an optimal preemptive schedule. This can be done in polynomial time by Theorem~\ref{thm:pmtn}. 
Let $x_t$ be a variable that is 1 if there exists a job $j$ that is processed at time~$t$ and~0 otherwise.
We shall refer to $x$ also as the \emph{maintenance profile}.
Furthermore, let $a:=\int_0^T x_t \ \mathrm{d}t$ be the active time, i.e., the total time of maintenance.
Then we apply the following \emph{splitting procedure}. 
We compute the time point $\bar{t}$ where half of the maintenance is done, i.e., $\int_0^{\bar{t}} x_t \ \mathrm{d}t = a/2$.
Let $E(t) := \{e\in E \mid r_e \leq t \wedge d_e \geq t \}$ and $p_{\max} := \max_{e \in E(t)} p_e$.
We reserve the interval $\left[ \bar{t} - p_{\max},\bar{t} + p_{\max} \right]$ for the maintenance of the jobs in $E(\bar{t})$, although we might not need the whole interval.
We schedule each job in $E(\bar{t})$ around $\bar{t}$ so that the processing time before and after $\bar{t}$ is the same. 
If the release date (deadline) of a jobs does not allow this, then we start (complete) the job at its release date (deadline).
Then we mark the jobs in $E(\bar{t})$ as scheduled and delete them from the preemptive schedule.

\begin{figure}[hbtp]
 \centering
 \begin{tikzpicture}
	\begin{scope}[yscale=0.75,xscale=0.75,yshift=1mm] 
	 \node[anchor=east] (text) at (-0.25,-2.5) {Initial};
	 \node[anchor=east] (text) at (-0.25,-3.0) {Recursion $1$};
	 \node[anchor=east] (text) at (-0.25,-3.5) {Recursion $2$};
	 \draw[fill=orange!30] (4,-2.3) rectangle (8,-2.7); 
	 \draw[thick,dashed] (0,-2.5) -- (4,-2.5); \draw[thick,dashed] (12,-2.5) -- (8,-2.5); \draw[thick] (6,-2.3) -- (6,-2.7);
	 \node[anchor=north] (text) at (6,-2.7) {$\bar{t}$};
	 \draw[fill=orange!30] (1,-2.8) rectangle (3,-3.2);
	 \draw[thick,dashed] (0,-3.0) -- (1,-3.0); \draw[thick,dashed] (4,-3.0) -- (3,-3.0); \draw[thick] (2,-2.8) -- (2,-3.2);
	 \draw[fill=orange!30] (9,-2.8) rectangle (11,-3.2);
	 \draw[thick,dashed] (8,-3.0) -- (9,-3.0); \draw[thick,dashed] (12,-3.0) -- (11,-3.0); \draw[thick] (10,-2.8) -- (10,-3.2); 
	 \draw[fill=orange!30] (0.25,-3.3) rectangle (0.75,-3.7); 
	 \draw[thick,dashed] (0,-3.5) -- (0.25,-3.5); \draw[thick,dashed] (1,-3.5) -- (0.75,-3.5); \draw[thick] (0.5,-3.3) -- (0.5,-3.7);
	 \draw[fill=orange!30] (3.25,-3.3) rectangle (3.75,-3.7); 
	 \draw[thick,dashed] (3,-3.5) -- (3.25,-3.5); \draw[thick,dashed] (4,-3.5) -- (3.75,-3.5); \draw[thick] (3.5,-3.3) -- (3.5,-3.7);
	 \draw[fill=orange!30] (8.25,-3.3) rectangle (8.75,-3.7); 
	 \draw[thick,dashed] (8,-3.5) -- (8.25,-3.5); \draw[thick,dashed] (9,-3.5) -- (8.75,-3.5); \draw[thick] (8.5,-3.3) -- (8.5,-3.7);
	 \draw[fill=orange!30] (11.25,-3.3) rectangle (11.75,-3.7); 
	 \draw[thick,dashed] (11,-3.5) -- (11.25,-3.5); \draw[thick,dashed] (12,-3.5) -- (11.75,-3.5); \draw[thick] (11.5,-3.3) -- (11.5,-3.7);
	\end{scope}
	\end{tikzpicture}
	\caption{A sketch of the splitting procedure and the reserved intervals.}\label{fig:popupper}
\end{figure}
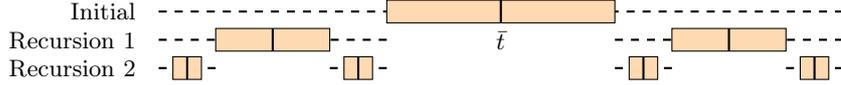	

This splitting procedure splits the whole problem into two separate instances $E_1 := \{e \in E \mid d_e < \bar{t} \}$ and $E_2 := \{e \in E \mid r_e > \bar{t} \}$.
Note that in each of these sub-instances the total active time in the preemptive schedule is at most $a/2$. 
We apply the splitting procedure to both sub-instances and follow the recursive structure of the splitting procedure until all jobs are scheduled.\qed
\end{proof}

\begin{lemma}\label{theorem:benefit-of-pmtn}
For \MINCONNECTIVITY on a path, the given algorithm constructs a non-preemptive schedule with cost $O(\log |E|)$ times the cost of an optimal preemptive schedule.
\end{lemma}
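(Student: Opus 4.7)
The plan is to analyze the splitting procedure level by level: bound the disconnection added by one recursive call in terms of $p_{\max}$, charge that quantity against the preemptive optimum of the current subinstance, and then sum the per-level bound over the $O(\log |E|)$ recursion depth. Throughout, let $\text{OPT}(E')$ denote the active time of an optimal preemptive schedule on a subinstance $E'$ (which on a path equals its disconnection time), and let $f(E')$ be the disconnection time of the non-preemptive schedule produced by the algorithm on $E'$. The underlying recurrence is
$$f(E') \le 2\,p_{\max} + f(E_1) + f(E_2).$$

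The first step is to verify the leading term $2\,p_{\max}$. By construction, each $e\in E(\bar t)$ is placed symmetrically around $\bar t$ in $[\bar t-p_e/2,\bar t+p_e/2]$ whenever this interval lies inside $[r_e,d_e]$; otherwise it is snapped to start at $r_e$ or end at $d_e$. Using $r_e\le\bar t\le d_e$ and $p_e\le p_{\max}$, even the snapped placements still fit inside $[\bar t-p_{\max},\bar t+p_{\max}]$, so the union of the newly occupied intervals has length at most $2\,p_{\max}$. Since on a path any maintained edge disconnects $s^+$ from $s^-$, the extra disconnection contributed at this level is at most $2\,p_{\max}$.

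The second step ties $p_{\max}$ to $\text{OPT}(E')$. The job $e^\star\in E(\bar t)$ attaining $p_{e^\star}=p_{\max}$ must be processed for $p_{\max}$ time units within its time window in any preemptive schedule, and its active time alone contributes $p_{\max}$ to the disconnection, yielding $\text{OPT}(E')\ge p_{\max}$. Moreover, the choice of $\bar t$ as the active-time midpoint implies $\text{OPT}(E_1)+\text{OPT}(E_2)\le \text{OPT}(E')$: the restriction of the preemptive schedule to $[0,\bar t)$ and $(\bar t,T]$ is a feasible preemptive schedule on $E_1$ and $E_2$ of total active time at most $\text{OPT}(E')$, with each piece at most $\text{OPT}(E')/2$. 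A short induction on depth then gives $\sum_{v\text{ at depth }d}\text{OPT}(v)\le\text{OPT}(E)$, so the total cost added at depth $d$ is at most $2\,\text{OPT}(E)$.

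The last and most delicate step is to bound the recursion depth $D$ by $O(\log|E|)$. Along every root-to-leaf branch the active time halves, so after $D$ levels each leaf subinstance has active time at most $\text{OPT}(E)/2^D$; once this falls below one, the subinstance must be empty because processing times are positive integers. Combined with the standard input encoding (where the total processing time is polynomial in $|E|$), this gives $D=O(\log\text{OPT}(E))=O(\log|E|)$, and therefore $f(E)\le 2D\cdot\text{OPT}(E)=O(\log|E|)\cdot\text{OPT}(E)$. I expect this depth-bounding step to be the main obstacle: the halving is in the active time, not in the number of jobs, so translating it to a logarithm in $|E|$ requires either the polynomial-size assumption on the input or a more careful charging argument against the $\le|E|$ internal nodes of the recursion tree, each of which schedules at least one new job.
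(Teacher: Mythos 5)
Your first two steps coincide with the paper's argument: the $2p_{\max}$ bound per splitting call, the charge $p_{\max}\le \mathrm{OPT}(E')$, and the halving $\mathrm{OPT}(E_1),\mathrm{OPT}(E_2)\le \mathrm{OPT}(E')/2$ yielding a per-level cost of at most $2\,\mathrm{OPT}(E)$ are exactly what the paper does. The gap is in your last step, and it is a real one. Bounding the recursion depth $D$ by $O(\log \mathrm{OPT}(E))$ and then invoking ``the standard input encoding (where the total processing time is polynomial in $|E|$)'' does not work: the processing times are arbitrary integers given in binary, so $\mathrm{OPT}(E)$ can be exponential in the input size and $\log \mathrm{OPT}(E)$ need not be $O(\log|E|)$. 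Moreover, the quantity you need is not the depth at all --- the recursion tree can legitimately have depth up to $|E|$ (e.g.\ when it degenerates into a path because one of $E_1,E_2$ is repeatedly empty), and neither $|E|$ nor $\log \mathrm{OPT}(E)$ is $O(\log|E|)$, so ``depth times per-level bound'' cannot give the claimed result.

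The paper closes this by bounding the \emph{total} cost over the recursion tree rather than its depth. Since the preemptive optimum is active at every chosen midpoint $\bar t_v$, every node of the tree schedules at least one job, so the tree has at most $|E|$ nodes; and a node at level $i$ incurs cost at most $2a/2^i$, where $a=\mathrm{OPT}(E)$. The sum $\sum_v 2a/2^{\mathrm{level}(v)}$ over any binary tree with at most $|E|$ nodes is maximized by the complete balanced tree: the levels $i\le \log_2|E|$ contribute at most $2a$ each, and all deeper nodes together contribute at most another $O(a)$ because their individual costs are below $2a/|E|$. This yields $O(\log|E|)\cdot a$ with no assumption on the magnitude of the $p_e$. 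You mention this charging argument against the $\le|E|$ nodes as a fallback in your final sentence, but it is the essential step and must actually be carried out; as written, your proof establishes only a bound of $O\bigl(\min(|E|,\log \mathrm{OPT}(E))\bigr)\cdot \mathrm{OPT}(E)$.
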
 
\begin{proof}
The progression of the algorithm can be described by a binary tree in which a node corresponds to a partial schedule generated by the splitting procedure for a subset of the job and edge set $E$.
The root node corresponds to the partial schedule for $E(\bar{t})$ and the (possibly) two children of the root correspond to the partial schedules generated by the splitting procedure for the two subproblems with initial job sets $E_1$ and $E_2$. 
We can cut a branch if the initial set of jobs is empty in the corresponding subproblem.
We associate with every node $v$ of this tree $B$ two values $(s_v,a_v)$ where $s_v$ is the number of scheduled jobs in the subproblem corresponding to $v$ and $a_v$ is the amount of maintenance time spent for the scheduled jobs. 

The binary tree $B$ has the following properties.
First, $s_v\geq 1$ holds for all $v \in B$, because the preemptive schedule processes some job at the midpoint $\bar{t}_v$ which means that there must be a job $e\in E$ with $r_e \leq \bar{t}_v \wedge d_e \geq \bar{t}_v$. 
This observation implies that the tree $B$ can have at most $|E|$ nodes and since we want to bound the worst total cost we can assume w.l.o.g. that $B$ has exactly $|E|$ nodes. 
Second, $\sum_{v\in B} a_v = \int_0^T y_t \ \mathrm{d}t$ where $y_t$ is the maintenance profile of the non-preemptive solution.

The cost $a_v$ of the root node (level-0 node) is bounded by $2p_{\max} \leq 2a$. The cost of each level-1 node is bounded by $2\cdot a/2 = a$, so the total cost on level 1 is also at most $2a$. 
It is easy to verify that this is invariant, i.e., the total cost at level $i$ is at most $2a$ for all $i \geq 0$, since the worst node cost $a_v$ halves from level $i$ to level $i+1$, but the number of nodes doubles in the worst case.  
We obtain the worst total cost when $B$ is a complete balanced binary tree. This tree has at most $O(\log |E|)$ levels and therefore the worst total cost is $a \cdot O(\log |E|)$. The total cost of the preemptive schedule is $a$. \qed
\end{proof}

\noindent
We now provide a matching lower bound for the power of preemption on a path.

\begin{lemma}\label{lemma:price_of_nonpmtn} 
  The power of non-preemption is $\Omega(\log |E|)$ for \MINCONNECTIVITY on a path.
\end{lemma}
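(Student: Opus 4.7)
I would prove this by constructing, for each integer $k \geq 1$, an instance $I_k$ on a path with $|E_k| = \Theta(2^k)$ edges such that the ratio between the optimal non-preemptive and the optimal preemptive objective values is $\Omega(k)$. Since on a path \MINCONNECTIVITY coincides with busy-time minimization, as noted in the related work, this ratio is exactly the power of non-preemption. With $|E_k| = \Theta(2^k)$ we have $k = \Theta(\log |E_k|)$, giving the claimed bound.

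The construction would be recursive. The base case $I_1$ is a small constant-size instance with constant gap, for example two unit-processing jobs with tight single-slot windows at the two ends of an interval together with one longer flexible job whose window spans the interval: preemptively the flexible job can be split to fit entirely inside the two tight slots, while any non-preemptive schedule must place it as a contiguous block that extends into previously idle time. For the recursive step, $I_k$ consists of two time-shifted copies of $I_{k-1}$ placed on disjoint windows $[0, T_{k-1}]$ and $[T_{k-1}+G, 2T_{k-1}+G]$ separated by a gap of length $G$, plus one additional bridging job with window $[0, 2T_{k-1}+G]$ and processing time $a_{k-1}$, where $a_{k-1}$ denotes the preemptive optimum of $I_{k-1}$.

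I would then establish two recurrences. For the preemptive optimum, I would exhibit a schedule showing $a_k \leq 2 a_{k-1}$: take the optimal preemptive schedules in each copy (with busy set of measure $a_{k-1}$ in each half) and distribute the bridging job's $a_{k-1}$ units of processing across those existing busy sets, adding no new busy time. For the non-preemptive optimum, I would argue $n_k \geq 2 n_{k-1} + \Omega(a_{k-1})$ by choosing $G$ so that no contiguous block of length $a_{k-1}$ can straddle the gap; the bridging block therefore either lies wholly inside the gap, contributing $a_{k-1}$ fresh busy time, or wholly inside one copy, where I would show that the copy's non-preemptive busy pattern leaves no contiguous subinterval of length $a_{k-1}$ into which the bridging block can be absorbed. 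Combining the two recurrences, and noting $a_k = 2a_{k-1}$, yields $n_k/a_k \geq n_{k-1}/a_{k-1} + \Omega(1) = \Omega(k)$.

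The hardest step will be the second recurrence: I need to guarantee that for every $k$ and every non-preemptive schedule of $I_{k-1}$, the busy set is sufficiently fragmented across its horizon that no subinterval of length $a_{k-1}$ is fully covered. I would strengthen the inductive hypothesis to include such a fragmentation property and carry it through the recursion. This will likely drive the precise choice of the gap length $G$ and the positioning of the copies so that unavoidable busy slots near both ends of each sub-instance persist up the recursion, preventing the adversary from collapsing the entire non-preemptive busy pattern of $I_{k-1}$ into a single block of length $a_{k-1}$ that could swallow the next bridging job at no cost.
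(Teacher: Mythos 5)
Your overall architecture (a hierarchy of flexible jobs whose lengths vary geometrically, few long ones and many short ones, arranged so that preemption can hide all of them inside a fixed busy set while non-preemption pays once per level) is the right idea and matches the spirit of the paper's construction. The base case and the preemptive recurrence $a_k\le 2a_{k-1}$ are fine. The problem is exactly where you locate it: the recurrence $n_k\ge 2n_{k-1}+\Omega(a_{k-1})$, and the fragmentation property you propose to carry through the induction is \emph{false} for the construction as you describe it. Already in $I_1$ a non-preemptive schedule that places the length-$2$ flexible block adjacent to a tight unit slot creates a contiguous busy interval of length $3>a_1$. Worse, at level $k$ the adversary can build a contiguous busy block of length $\approx a_{k-1}$ entirely inside one copy of $I_{k-1}$: place that copy's own bridging job (length $a_{k-2}$) at the left end of the copy's internal gap, place the left sub-copy's bridging job (length $a_{k-3}$) flush against it at the right end of its window, and continue down the nested chain of right-most sub-copies; the lengths sum to $a_{k-2}+a_{k-3}+\cdots\approx a_{k-1}$, all windows permit these placements, and the resulting block absorbs the level-$k$ bridging job at $O(1)$ extra cost. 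So no choice of $G$ at the top level saves you, because the stacking happens inside one copy. To rescue the argument you would need an amortized/charging claim of the form ``any schedule that concentrates $a_{k-1}$ of busy time into one interval must have paid $\Omega(a_{k-1})$ extra elsewhere,'' which is the actual heart of the lower bound and is missing.

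For comparison, the paper avoids the recursion and the fragmentation issue entirely. It builds $\ell$ levels directly, level $i$ consisting of $i$ jobs of length $P/i$ tiling $[0,P]$ with (initially) tight windows, and then inserts gaps of length $P$ at every internal window boundary. The gaps are what make the counting work: once a block of length $P/i$ is committed, at most \emph{one} job from each other level can overlap it, because the windows of any two jobs of the same level are now separated by a length-$P$ gap that no single non-preemptive block can bridge without incurring cost $\ge P$. This yields the clean bound $\sum_{i=1}^{\ell}P/i=\Omega(P\log\ell)$ against a preemptive optimum of $P$, with $\ell=\Theta(\sqrt{|E|})$. If you want to keep your recursive formulation, you should import that mechanism: the gaps must separate the windows of \emph{sibling} flexible jobs at every level so that hiding one of them under a committed block forecloses hiding its siblings, and the lower bound should then be argued by a global counting/charging step rather than by a per-level fragmentation invariant.
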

\begin{proof}
We construct a path with $|E|$ edges and divide the $|E|$ jobs into $\ell$ levels such that level~$i$ contains exactly $i$ jobs for $1\leq i\leq \ell$. 
Hence, we have $|E|=\ell(\ell+1)/2$ jobs. Let $P$ be a sufficiently large integer such that all of the following numbers are integers.
Let the $j$th job of level $i$ have release date $(j-1)P/i$, deadline $(j/i)P$, and processing time $P/i$, where $1\leq j\leq i$. 
Note that now no job has flexibility within its time window, and thus the value of the resulting schedule is~$P$.  

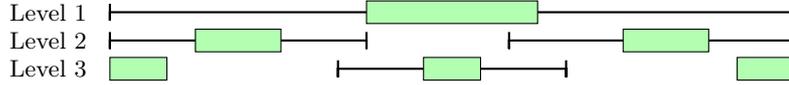
\begin{figure}[hbtp]
 \centering
 \begin{tikzpicture}
	\begin{scope}[yscale=0.75,yshift=1mm,xscale=0.75] 
	 \node[anchor=east] (text) at (-0.25,-2.5) {Level $1$};
	 \node[anchor=east] (text) at (-0.25,-3.0) {Level $2$};
	 \node[anchor=east] (text) at (-0.25,-3.5) {Level $3$};
	 \draw[fill=green!30] (4.5,-2.3) rectangle (7.5,-2.7); 
	 \draw[thick] (0,-2.35) -- (0,-2.65) --  (0,-2.5) -- (4.5,-2.5); \draw[thick] (12,-2.35) -- (12,-2.65) --  (12,-2.5) -- (7.5,-2.5);
	 \draw[fill=green!30] (1.5,-2.8) rectangle (3.0,-3.2);
	 \draw[thick] (0,-2.85) -- (0,-3.15) --  (0,-3.0) -- (1.5,-3.0); \draw[thick] (4.5,-2.85) -- (4.5,-3.15) --  (4.5,-3.0) -- (3.0,-3.0);
	 \draw[fill=green!30] (9.0,-2.8) rectangle (10.5,-3.2);
	 \draw[thick] (7,-2.85) -- (7,-3.15) --  (7,-3.0) -- (9.0,-3.0); \draw[thick] (12,-2.85) -- (12,-3.15) --  (12,-3.0) -- (10.5,-3.0);
	 \draw[fill=green!30] (0.00,-3.3) rectangle (1.00,-3.7); 
	 \draw[fill=green!30] (5.5,-3.3) rectangle (6.5,-3.7); 
	 \draw[thick] (4,-3.35) -- (4,-3.65) --  (4,-3.5) -- (5.5,-3.5); \draw[thick] (8,-3.35) -- (8,-3.65) --  (8,-3.5) -- (6.5,-3.5);		
	 \draw[fill=green!30] (11,-3.3) rectangle (12,-3.7); 	
	\end{scope}
	\end{tikzpicture}
	\caption{A rough sketch of the instance for 3 levels.}\label{fig:poplower}
\end{figure}	

We now modify the instance as follows. 
At every time point $t$ where at least one job has a release date and another job has a deadline, we stretch the time horizon by inserting a gap of size P. 
This stretching at time $t$ can be done by adding a value of $P$ to all time points after the time point $t$, and also adding a value of $P$ to all release dates at time $t$. 
The deadlines up to time $t$ remain the same. 
Observe that the value of the optimal preemptive schedule is still $P$, because when introducing the gaps we can move the initial schedule accordingly such that we do not maintain any job within the gaps of size $P$. Figure~\ref{fig:poplower} shows a rough sketch of this construction. 

We now consider the optimal non-preemptive schedule. 
The cost of scheduling the only job at level $1$ is $P$. 
In parallel to this job we can schedule at most one job from each other level, without having additional cost. 
This is guaranteed by the introduced gaps. 
At level~$2$ we can fix the remaining job with additional cost $P/2$. 
As before, in parallel to this fixed job, we can schedule at most one job from each level $i$ where $3\leq i\leq \ell$. 
Applying the same argument to the next levels, we notice that for each level $i$ we introduce an additional cost of value $P/i$. 
Thus the total cost is at least $\sum^{\ell}_{i=1} P/i \in \Omega(P\log \ell)$ with $\ell \in \Theta(\sqrt{|E|})$. \qed
\end{proof}

Next, we show that for \MAXCONNECTIVITY, the power of preemption can be unbounded.

\begin{theorem}
 \label{thm:powerofpreemptionmax}
 For non-preemptive \MAXCONNECTIVITY on a path the power of preemption is unbounded.
\end{theorem}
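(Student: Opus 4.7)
The plan is to exhibit a single instance on a path for which the non-preemptive optimum is exactly $0$ while the preemptive optimum is strictly positive; since the ratio of the preemptive to the non-preemptive optimal value is then infinite, the power of preemption is unbounded.

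I would use a path $s^+ - v_1 - v_2 - s^-$ with three edges $e_A = \{s^+, v_1\}$, $e_B = \{v_1, v_2\}$, $e_C = \{v_2, s^-\}$ and time horizon $T = 3$. Make $e_A$ and $e_C$ \emph{rigid}: set $r_{e_A} = 0$, $d_{e_A} = 1$, $p_{e_A} = 1$, and $r_{e_C} = 2$, $d_{e_C} = 3$, $p_{e_C} = 1$, so that $e_A$ must undergo maintenance throughout $[0,1]$ and $e_C$ throughout $[2,3]$, in every feasible schedule. Make $e_B$ \emph{flexible}: set $r_{e_B} = 0$, $d_{e_B} = 3$, $p_{e_B} = 2$.

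For the non-preemptive bound, I would observe that $e_B$ must be scheduled as a single interval $[a, a+2]$ with $a \in [0, 1]$, and every such interval contains $[1, 2]$ because $a \le 1$ and $a+2 \ge 2$. Together with the rigid maintenance of $e_A$ on $[0,1]$ and of $e_C$ on $[2,3]$, the union of all maintenance times equals $[0, 3]$, so $s^+$ and $s^-$ are disconnected at every point in time and the non-preemptive optimum is $0$.

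For the preemptive bound, I would split $e_B$ into the two pieces $[0,1]$ and $[2,3]$, whose combined length equals $p_{e_B} = 2$ and which therefore form a valid preemptive schedule. Maintenance then occurs only on $[0,1] \cup [2,3]$, and the full path is intact throughout $[1, 2]$, so the preemptive value is at least $1$ and the ratio is unbounded. There is essentially no obstacle beyond the elementary geometric fact that an interval of length $2$ inside a window of length $3$ must contain the central unit interval $[1, 2]$; the two rigid side jobs then amplify this unavoidable coverage into total disconnection in the non-preemptive case, whereas preemption lets the long job hide inside the mandatory outages of the rigid jobs.
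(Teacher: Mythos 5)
Your proof is correct and follows essentially the same approach as the paper: two rigid jobs pin down the ends of the horizon, a long non-preemptive job of length $2$ in a window of length $3$ must cover the middle unit interval, and preemption lets that job hide inside the rigid jobs' mandatory outages. The paper's instance uses four edges (two flexible jobs) rather than your three, but the argument is identical in substance and your slightly smaller instance works just as well.
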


\begin{proof}
	Consider a path of four consecutive edges~$e_1=\{s^+,u\},e_2=\{u,w\},e_3=\{w,v\},e_4=\{v,s^-\}$, each associated with a maintenance job as depicted in Figure~\ref{fig:unbounded_integrality_gap_on_a_path}.
That is, $r_1=r_2=0,d_1=r_3=p_1=p_4=1,p_2=p_3=2,r_4=d_2=3,d_3=d_4=4$.	

\begin{figure}[hbtp]
 \centering
 \begin{tikzpicture}
	\draw[-] (0,-1) node[left] {$t$} -- (1,-1)  
					-- (2,-1) 
					-- (3,-1)  
					-- (4,-1) ;
				\draw[thick] (0,-1.1) -- (0,-.9);
				\foreach \coo in {1,2,3,4}
				{
				  \draw[thick] (\coo,-1.1) -- (\coo,-.9) node[above=3pt] {\coo};
				}
				\begin{scope}[yscale=0.75,yshift=1mm] 
				\draw[fill=red!30] (0,-1.8) rectangle (1,-2.2); \node[anchor=east] (text) at (0,-2) {$e_1$};
				\draw[fill=green!30] (0.5,-2.3) rectangle (2.5,-2.7); \node[anchor=east] (text) at (0,-2.5) {$e_2$}; 
				\draw[thick] (0,-2.35) -- (0,-2.65) --  (0,-2.5) -- (0.5,-2.5); \draw[thick] (3,-2.35) -- (3,-2.65) --  (3,-2.5) -- (2.5,-2.5);
				\draw[fill=green!30] (1.5,-2.8) rectangle (3.5,-3.2); \node[anchor=east] (text) at (0,-3) {$e_3$}; 
				\draw[thick] (1,-2.85) -- (1,-3.15) --  (1,-3.0) -- (1.5,-3.0); \draw[thick] (4,-2.85) -- (4,-3.15) --  (4,-3.0) -- (3.5,-3.0);
				\draw[fill=red!30] (3,-3.3) rectangle (4,-3.7); \node[anchor=east] (text) at (0,-3.5) {$e_4$};
				\end{scope}
	\end{tikzpicture}
	\caption{Example for an unbounded power of preemption.}\label{fig:unbounded_integrality_gap_on_a_path}
\end{figure}
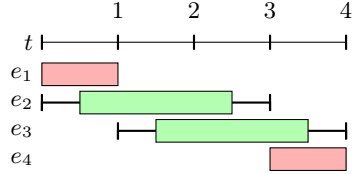	
	There is no non-preemptive schedule that allows connectivity at any point in time, as the maintenance job of edge~$e_i$ blocks edge~$e_i$ in time slot~$[i-1,i]$. 
	On the other hand, when allowing preemptive schedules, we can process the job of edge~$e_2$ in $[0,2]$ and the job of edge~$e_3$ in $[1,2]$ and $[3,4]$.
	Then no maintenance job is scheduled in the time interval~$[2,3]$ and therefore we have connectivity for one unit of time. \qed
\end{proof} 

\section{Mixed Scheduling}\label{sec:mixed}

We know that both the non-preemptive and preemptive \MAXCONNECTIVITY
and \MINCONNECTIVITY on a path are solvable in polynomial time by
Theorem~\ref{thm:pmtn} and \cite[Theorem 9]{KhandekarEtAl15},
respectively. Notice that the parameter $g$ in~\cite{KhandekarEtAl15}
is in our setting $\infty$.
Interestingly, the complexity changes when mixing the two job types -- even on a simple path.

\begin{theorem}
 \label{theorem:mixed}
 \MAXCONNECTIVITY and \MINCONNECTIVITY with preemptive and non-preemptive maintenance jobs is weakly \NP-hard, even on a path.
\end{theorem}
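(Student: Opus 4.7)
The plan is to prove weak \NP-hardness by reducing from \PARTITION: given positive integers $a_1,\dots,a_n$ with $\sum_{i=1}^n a_i = 2B$, decide whether some $S\subseteq\{1,\dots,n\}$ satisfies $\sum_{i\in S} a_i = B$. On a path the disconnected time points are exactly those at which some edge is under maintenance, so \MINCONNECTIVITY{} coincides with the total busy time and \MAXCONNECTIVITY{} equals $T$ minus the busy time; it therefore suffices to establish weak \NP-hardness for one of the two variants, the other following by complementarity. I would target \MINCONNECTIVITY.

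Given a \PARTITION{} instance, I would construct a path of $n+O(1)$ edges with horizon $T$ polynomial in $n$ and $B$, and equip the edges with a carefully chosen mixture of preemptive and non-preemptive jobs. The intended building blocks are: (i) for each value $a_i$, a non-preemptive \emph{number job} $N_i$ of processing time $a_i$ whose window spans two designated half-windows $[0,B]$ and $[B+1,2B+1]$ flanking a middle slot; (ii) a short non-preemptive \emph{middle blocker} with window exactly $[B,B+1]$, making this unit of time unavoidably busy; (iii) a single preemptive \emph{filler job} $P$ of processing time $2B$, whose fragments must complement the number jobs inside the two halves; and (iv) a few auxiliary short non-preemptive boundary blockers, together with $T$ slightly larger than $2B+1$, designed to rule out degenerate schedules that simply compress all activity to one end of the horizon.

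For the forward implication, a witnessing subset $S$ with $\sum_{i\in S} a_i = B$ yields a schedule that places $\{N_i : i\in S\}$ disjointly in $[0,B]$, the remaining number jobs disjointly in $[B+1,2B+1]$, the middle blocker in $[B,B+1]$, and distributes $P$ preemptively inside the already-busy regions of the two halves; this exactly matches a threshold busy time $\tau$. For the reverse implication I would argue that any schedule with busy time $\leq\tau$ is forced to place the number jobs without overlaps across the two halves, with exactly $B$ units of processing on each side: any overlap of two number jobs on the same side, any $N_i$ spanning the middle blocker, or any imbalance in the split would leave a vacant sub-region inside the halves that $P$ cannot productively reuse, so $P$ would spill onto fresh time outside, pushing the busy time strictly above $\tau$.

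The main obstacle is that on a path two non-preemptive jobs on different edges may be scheduled during the same time interval at no additional busy time, which naively lets one pile all number jobs onto a single half-window and tuck $P$ inside them. The heart of the construction is therefore a tight accounting between $P$'s fixed total processing of $2B$ and the combined half-window capacity of $2B$: any overlap of two number jobs on the same side shrinks the jointly covered area by the overlap, and the boundary blockers together with the choice of $T$ must be arranged so that the processing budget $P$ then needs to place elsewhere cannot hide inside any already-busy slot. Making this accounting rigorous enough that the gap between $\tau$ and the optimum of any non-partition-compatible schedule is strictly positive is the technical crux. Once this is achieved, all numerical parameters are polynomial in $n$ and $B$, so the reduction is pseudo-polynomial and yields weak \NP-hardness for both \MINCONNECTIVITY{} and \MAXCONNECTIVITY{}.
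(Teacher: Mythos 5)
Your choice of reduction (\PARTITION) and the general shape (non-preemptive jobs encode the subset choice, a preemptive filler enforces the balance) match the paper, but the construction as sketched has a genuine gap in the reverse direction, and you correctly flag it as the ``technical crux'' without resolving it. Two things break. First, a \emph{single} preemptive filler $P$ of size $2B$ whose window spans both half-windows exerts no force towards a \emph{balanced} split: $P$ is free to distribute its processing arbitrarily between the two halves, so nothing ties the amount of number-job processing on the left to the amount on the right. Second, and more fatally, the number jobs live on distinct edges of the path, so they may freely overlap one another in time. If all $a_i\le B$, the scheduler can pack every $N_i$ into $[0,B]$ with heavy overlap, making the union $U$ of their intervals have measure at most $B$; then $P$ hides $|U\cup[B,B+1]|$ of its $2B$ units inside the already-busy region and spills only $2B-|U\cup[B,B+1]|$ onto fresh time, for a total busy time of exactly $2B<2B+1=\tau$ regardless of whether a partition exists. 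The same computation shows a disjoint placement with one $N_i$ straddling the middle blocker also beats $\tau$, contradicting your claim that such schedules force $P$ to spill. Boundary blockers only add fixed busy intervals and cannot prevent number jobs from overlapping each other, so they do not repair this.

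The paper closes exactly these holes with two devices you are missing. It uses \emph{two} preemptive jobs, one confined to $[0,\tau]$ and one to $[\tau,2\tau]$, each of size $W=B+\sum_i x_i$, so each half must independently contain $W$ units of busy time; since the total surplus over the forced skeleton is $2B$, each half must carry surplus exactly $B$. And it inflates each non-preemptive job to processing time $x_i+a_i$ with geometrically growing $x_i=4^{n-i}B$, paired with two \emph{tight} (zero-slack) jobs of length $x_i$ placed symmetrically in the two halves; the inequality $x_\ell\ge 2(B+\sum_{k>\ell}x_k)$ then shows by induction that each non-preemptive job must snap to its release date or its deadline, i.e.\ align with exactly one of the two tight jobs, contributing surplus $a_i$ to exactly one side. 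Without some analogue of this ``snapping'' mechanism and the per-side filler, the accounting you describe cannot be made to work, so the reduction as proposed does not establish hardness.
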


\begin{proof}
We reduce the \NP-hard \PARTITION~problem to \MAXCONNECTIVITY. We will show that there is a gap in the objective value between instances derived from \YES- and \NO-instances of \PARTITION, respectively. This gap is same for \MINCONNECTIVITY, since maximizing the time in which we have connectivity is the same as minimizing the time in which we do not have connectivity.
	
	\introduceproblem{\PARTITION}
	{A set of $n$ natural numbers $A = \{a_1, \dots, a_n\} \subset \N$ with $\sum_{i=1}^n a_i = 2B$ for some $B \in \N$.}
	{Is there a subset $S \subseteq A$ with $\sum_{a \in S}a = B$?}
	
Given an instance of \PARTITION, we create a \MAXCONNECTIVITY instance based on a path consisting of $3n+2$ edges between $s^+$ and $s^-$ with preemptive and non-preemptive maintenance jobs.  
We create three types of job sets denoted as $J_1, J_2$ and $J_3$, where the first two job sets model the binary decision involved in choosing a subset of numbers to form a partition, whereas the third job set performs the summation over the numbers picked for a partition. The high-level idea is depicted in Figure~\ref{fig:HardnessMixedShort}.

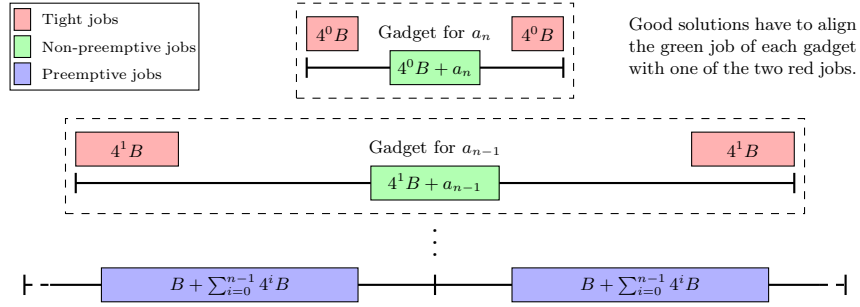
\begin{figure}[hbtp]
 \centering
 \begin{tikzpicture}[scale=0.9,xscale=0.75,every node/.style={scale=0.75}]
  \jobAShort{4cm}{-2.5}{1}{1.5}{4^0B}{4^0B}
  \jobBShort{3.5cm}{-1.5+0.625}{1.75}{4^0B+a_n}{x_n+a_n}{-2.5}{2.5}
	\draw[dashed] (-2.7,3.3) rectangle (2.7,4.7);
	\node (text) at (0,4.25) {\footnotesize Gadget for $a_n$};
	\node (expln) at (6,4.05) {\begin{minipage}{4cm}\footnotesize Good solutions have to align the green job of each gadget with one of the two red jobs.\end{minipage}};
	\begin{scope}[yshift=-3mm]
   \jobAShort{2.6cm}{-7}{2}{5}{4^1B}{4^1B}
   \jobBShort{2.1cm}{-1.25}{2.5}{4^1B+a_{n-1}}{x_{n-1}+a_{n-1}}{-7}{7}   
	 \draw[dashed] (-7.2,1.9) rectangle (7.2,3.3);
	 \node (text) at (0,2.85) {\footnotesize Gadget for $a_{n-1}$};
	\end{scope}
  \node (x1) at (0,1.3) {\scalebox{1.5}{$\vdots$}};
	\begin{scope}[yshift=2.3cm]
	 \jobCShort{-2.0cm}{-6.5}{4}{B+ \sum_{i=0}^{n-1} 4^{i}B}{B+ \sum_{i=1}^n x_i}{-7.5}{0}
	 \jobCShort{-2.0cm}{1.5}{4}{B+ \sum_{i=0}^{n-1} 4^{i}B}{B+ \sum_{i=1}^n x_i}{0}{7.5}
	 \draw[thick] ($(0,-2)+(0,0.1)$) -- ($(0,-2)+(0,0.4)$);
	 \draw[thick,dashed] ($(-8,-2)+(0,0.25)$) -- ($(-7.5,-2)+(0,0.25)$);
	 \draw[thick,dashed] ($(7.5,-2)+(0,0.25)$) -- ($(8,-2)+(0,0.25)$);
   \draw[thick] ($(-8,-1.75)+(0,0.15)$) -- ($(-8,-1.75)+(0,-0.15)$);
   \draw[thick] ($(8,-1.75)+(0,0.15)$) -- ($(8,-1.75)+(0,-0.15)$);	
	\end{scope}
	\begin{scope}[xshift=-3cm,yshift=0.7cm,scale=0.8,every node/.style={scale=0.8}]
	 \draw[draw=black,fill=red!30] (-6.5,4.5) rectangle ($(-6.5,4.5)+(0.4,0.4)$);%
	 \node[anchor=west] (text) at (-6,4.675) {\scriptsize Tight jobs};
	 \draw[draw=black,fill=green!30] (-6.5,4.0) rectangle ($(-6.5,4.0)+(0.4,0.4)$);%
	 \node[anchor=west] (text) at (-6,4.175) {\scriptsize Non-preemptive jobs};
	 \draw[draw=black,fill=blue!30] (-6.5,3.5) rectangle ($(-6.5,3.5)+(0.4,0.4)$);%
	 \node[anchor=west] (text) at (-6,3.675) {\scriptsize Preemptive jobs};
	 \draw (-6.6,5) rectangle (-2.0,3.4);
	\end{scope}
 \end{tikzpicture}
 \caption{Instance created from a \PARTITION instance $a_1,\dots,a_n,B$. The number inside the blocks are the processing times of the jobs.\label{fig:HardnessMixedShort}}
\end{figure}

The job set $J_1:= \{1,2,\ldots,2n-1,2n\}$ contains $2n$ \emph{tight} jobs, i.e., $r_j+p_j = d_j$ for all $j\in J_1$. 
For every element $a_i \in A$ we have two tight jobs $i$ and $2n-(i-1)$ both having processing time $4^{n-i}B =: x_i$.
The release date of a job $j\in \{2,\ldots,n\} \subset J_1$ is $r_j = \sum_{k=1}^{j-1} 2x_k + a_k$ and $r_1 = 0$.
Let $\tau := \sum_{k=1}^{n} 2x_k + a_k$.
For $j\in \{n+1,\ldots,2n\} \subset J_1$ we have $d_j = \tau + \sum_{k=n+1}^{j} 2x_{2n-k+1} + a_{2n-k+1}$.
Note that the tight jobs in $J_1$ are constructed in such a way that everything is symmetric with respect to the time point $\tau$.

The job set $J_2:= \{2n+1,\ldots, 3n\}$ contains $n$ non-preemptive jobs.
Let $j_i := 2n+i$.
For every element $a_i \in A$ we introduce job $j_i$ with processing time $p_{j_i}=x_i+a_i$, release date $r_{j_i}= r_i$, and deadline $d_{j_i} = d_{2n-(i-1)}$.
Again, everything is symmetric with respect to time point $\tau$. 

Finally, the set $J_3 := \{3n+1,3n+2\}$ contains two preemptive jobs, where each of them has processing time $W:=B+ \sum_{i=1}^n x_i$.
Furthermore, we have $r_{3n+1}=0$, $d_{3n+1}=\tau$, $r_{3n+2}=\tau$, $d_{3n+2}=2\tau$.

We now show that there is a feasible schedule for the constructed instance that disconnects the path for at most $2W$ time units if and only if the given \PARTITION instance is a \YES-instance.

Suppose there is a subset $S \subseteq A$ with $\sum_{a \in S}a = B$. 
For each $a_i \in S$, we start the corresponding job $j_i \in J_2$ at its release date and the remaining jobs in $J_2$ corresponding to the elements $a_i \in A\setminus S$ are scheduled such that they complete at their deadline.    
This creates $B+\sum_{i=1}^n x_i$ time slots in both intervals $[0,\tau]$ and $[\tau,2\tau]$ with no connection between $s^+$ and $s^-$.
The jobs $3n+1$ and $3n+2$ can be preempted in $[0,\tau]$ and $[\tau,2\tau]$, respectively, and thus if we align their processing with the chosen maintenance slots, we get a schedule that disconnects $s^+$ and $s^-$ for $2W=2(B+\sum_{i=1}^n x_i)$ time units. 

Conversely, suppose that there is a feasible schedule for the constructed instance that disconnects the path for at most $2W$ time units.
By induction on $i$, we show that every job $j_i=2n+i$ either starts at its release date or it completes at its deadline in such a schedule. 

Consider the base case of $i=1$.
We first observe that w.l.o.g. job $j_1$ either starts at its release date or completes at its deadline or is scheduled somewhere in $[x_1, 2\tau - x_1]$.
Suppose it starts somewhere in $(0,x_1)$ or completes somewhere in $(\tau-x_1,\tau)$.
Then we do not increase the total time where the path is disconnected if we push job $j_1$ completely to the left or completely to the right.
If we schedule job $j_1$ in $[x_1, 2\tau - x_1]$, then the total time where the path is disconnected is at least $3x_1 + a_1 > 2x_1 +x_1$.
We will now show that $x_1 \geq 2(B + \sum_{k=2}^n x_k)$ for $n \geq 2$, which shows that the path is then disconnected for more than $2W$ time units, and thus job $j_1$ cannot be processed in $[x_1, 2\tau - x_1]$.
The inequality is true for $n\geq 2$, since
\ama
	2B + 2 \sum_{k=2}^n x_k &= 2B(1 + \sum_{k=2}^n 4^{n-k})\\ &= 2B(1+ \sum_{k=0}^{n-2} 4^{k})\\ &=2B(1+ 1/3(4^{n-1} -1))\\ &\leq 4^{n-1} B = x_1.
\ema
This finishes the proof for $i=1$.

Suppose, the statement is true for $i=1,\ldots,\ell-1$ with $\ell\in \{2,\ldots,n-1\}$.
As in the base case, we can show that job $j_{\ell}$ either starts at its release date or completes at its deadline or is scheduled somewhere in $[r_{j_{\ell}}+x_{\ell}, d_{j_{\ell}}-x_{\ell}]$. 
If job $j_{\ell}$ is processed in $[r_{j_{\ell}}+x_{\ell}, d_{j_{\ell}}-x_{\ell}]$, then the total time where the path is disconnected is at least \[\sum_{k=1}^{\ell -1} (2x_k + a_k ) + 3x_{\ell} + a_{\ell} > \sum_{k=1}^{\ell}2x_k + x_\ell. \]
Again, we will show that $x_\ell \geq 2(B + \sum_{k=\ell+1}^n x_k)$ for $\ell\in \{2,\ldots,n-1\}$, which shows that the path is then disconnected for more than $2W$ time units, and thus job $j_{\ell}$ cannot be processed in $[r_{j_{\ell}}+x_{\ell}, d_{j_{\ell}}-x_{\ell}]$.
The inequality is true for $\ell\in \{2,\ldots,n-1\}$, since
\ama
	2B + 2 \sum_{k=\ell+1}^n x_k &= 2B(1 + \sum_{k=\ell+1}^n 4^{n-k})\\ &= 2B(1+ \sum_{k=0}^{n-\ell -1} 4^{k})\\ &=2B(1+ 1/3(4^{n-\ell} -1))\\
	&\leq 4^{n-\ell} B = x_\ell.
\ema

For $i=n$, we again use the fact that $j_n$ either starts at its release date or completes at its deadline or is scheduled somewhere in $[r_{j_{n}}+x_{n}, d_{j_{n}}-x_{n}]$.
If the latter case is true, then the total time where the path is disconnected is at least 
\begin{eqnarray*}
	\sum_{k=1}^{n-1}(2x_k +a_k) + 3x_n +a_n &=& \sum_{k=1}^{n}(2x_k +a_k) + x_n \\
	&> & 2( B + \sum_{k=1}^{n}x_k) = 2W.
\end{eqnarray*}

There is a feasible schedule for the constructed instance that disconnects the path for at most $2(B+\sum_{k=1}^n x_k)$ time units.
This means that in both $[0,\tau]$ and $[\tau,2\tau]$ the path is disconnected for exactly $B + \sum_{k=1}^n x_k$ time units.
Consider the set $S:=\{i : j_i \mbox{ starts at its release date} \}$.
We conclude that 
\begin{equation}
 \sum_{k=1}^n x_k + \sum_{k\in S} a_k = \sum_{k=1}^n x_k + \sum_{k\notin S} a_k = \sum_{k=1}^n x_k + B .
\end{equation} \qed
\end{proof}

For \MINCONNECTIVITY, running the optimal preemptive and non-pre\-emp\-ti\-ve algorithms on the respective job sets individually gives a $2$-approximation. 

\begin{theorem}
  \label{theorem:mixedapprox}
  There is a 2-approximation algorithm for \MINCONNECTIVITY on a path with preemptive and non-preemptive maintenance jobs.	 
\end{theorem}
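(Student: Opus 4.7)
The plan is to use the algorithm alluded to in the sentence preceding the theorem. Partition the jobs into the preemptive set $J_p$ and the non-preemptive set $J_{np}$. Apply Theorem~\ref{thm:pmtn} to $J_p$ alone to obtain an optimal preemptive sub-schedule $S_p$, and apply the busy-time algorithm of~\cite{KhandekarEtAl15} to $J_{np}$ alone to obtain an optimal non-preemptive sub-schedule $S_{np}$. The output is the combined schedule $S$ that processes each job exactly as prescribed by its own sub-schedule. Since both ingredients run in polynomial time, so does the combined algorithm.

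Feasibility of $S$ is immediate: the two sub-schedules act on disjoint edge sets, so release dates, deadlines, processing times, and the (non-)preemption requirement of every job are already enforced by the sub-schedule that contains it.

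For the approximation guarantee, let $D(T)$ denote the total disconnection time of a schedule $T$ and let $\mathrm{OPT}$ denote the optimum of the given mixed instance. The two key observations are both consequences of the fact that on a path maintenance of any single edge disconnects $s^+$ from $s^-$. First, the set of disconnection times of $S$ is exactly the union of the disconnection time sets of $S_p$ and $S_{np}$, which gives the union bound $D(S)\leq D(S_p)+D(S_{np})$. Second, let $S^\ast$ be an optimal mixed schedule and let $S^\ast|_{J_p}$ denote its restriction to $J_p$; this restriction is a feasible preemptive schedule for the pure preemptive sub-instance, and since removing a job's maintenance intervals on a path can only add connectivity, the set of its disconnection times is a subset of that of $S^\ast$. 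Hence $D(S_p)\leq D(S^\ast|_{J_p})\leq D(S^\ast)=\mathrm{OPT}$ by the optimality of $S_p$. The identical argument with $J_{np}$ in place of $J_p$ yields $D(S_{np})\leq\mathrm{OPT}$, and combining these with the union bound proves $D(S)\leq 2\cdot\mathrm{OPT}$.

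There is no real obstacle here; the only step that warrants a careful sentence is the restriction argument, which relies crucially on the path structure so that dropping jobs from a schedule never introduces new disconnection times. This path-specific monotonicity is what separates the clean $2$-approximation from more general graph topologies.
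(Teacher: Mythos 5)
Your proposal is correct and follows essentially the same approach as the paper: split the jobs into preemptive and non-preemptive sets, solve each optimally, combine, and bound the cost via the union bound together with the observation that the restriction of an optimal mixed schedule to either job set is feasible for the corresponding pure sub-instance and costs at most the optimum. Your explicit remark that dropping jobs never creates new disconnection times on a path just makes precise a step the paper leaves implicit.
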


\begin{proof}
Consider an optimal schedule $S^*$ for the mixed instance and let $|S^*|$ be the total time of disconnectivity in $S^*$.
Furthermore, let $S^*_{np}$ (resp. $S^*_{p}$) be the restriction of $S^*$ to only non-preemptive (resp. preemptive) jobs.
Note that the schedule $S^*_{np}$ (resp. $S^*_{p}$) is feasible for the corresponding non-preemptive (resp. preemptive) instance.
We separate the preemptive from the non-preemptive jobs and obtain two separate instances.  
Solving them individually in polynomial time and combining the resulting two solutions $S_{np}$ and $S_{p}$ to a schedule $S$ gives the claimed result, because
$|S| \leq |S_{np}| + |S_{p}| \leq |S^*_{np}| + |S^*_{p}| \leq 2 |S^*|$.
 \qed      	
\end{proof}
\section{Conclusion} 
Combining network flows with scheduling aspects is a very recent field of research. While there are solutions using IP based methods and heuristics, 
exact and approximation algorithms have not been considered extensively. We provide strong hardness results for connectivity problems, which is inherent to all forms of maintenance scheduling, and give algorithms for tractable cases. 
 
In particular, the absence of $c\sqrt[3]{|E|}$-approximation algorithms for some $c > 0$ for general graphs indicates that heuristics and IP-based methods~\cite{BolandKK15,BolandNKK15,BolandKWZ14} are a good way of approaching this problem. An interesting open question is whether the inapproximability results carry over to series-parallel graphs, as the network motivating~\cite{BolandKK15,BolandNKK15,BolandKWZ14} is series-parallel.
Our results on the power of preemption as well as the efficient algorithm for preemptive instances show that allowing preemption is very desirable. Thus, it could be interesting to study models where preemption is allowed, but comes at a cost to make it more realistic.

On a path, our results have implications for minimizing busy time, as we want to minimize the number of times where some edge on the path is maintained. 
Here, an interesting open question is whether the 2-approximation for the mixed case can be improved, e.g. by finding a pseudo-polynomial algorithm, a better approximation ratio, or conversely, to show an inapproximability result for it.

\paragraph{Acknowledgements.} We thank the anonymous reviewers for their helpful comments.

\bibliography{literature}

\end{document}